\definecolor{color1}{RGB}{230,57,70}
\definecolor{color2}{RGB}{7,36,66}
\definecolor{color3}{RGB}{50,23,77}
\newcommand\prob\textsc
\newcommand{\problemtitle}[1]{\gdef\@problemtitle{#1}}
\newcommand{\probleminput}[1]{\gdef\@probleminput{#1}}
\newcommand{\problemquestion}[1]{\gdef\@problemquestion{#1}}
\newcommand{\problempromise}[1]{\gdef\@problempromise{#1}}
  \par\addvspace{.5\baselineskip}
  \par\addvspace{.5\baselineskip}
\newcommand{\defas}{\coloneqq}
\newcommand{\lham}{{\textup{\textsc{$k$-local Hamiltonian}}}\xspace}
\newcommand{\mham}{{$\mathcal{M}$\textup{\textsc{-Hamiltonian}}}\xspace}
\newcommand{\kham}{{$\mathcal{K}$\textup{\textsc{-Hamiltonian}}}\xspace}
\newcommand{\precmham}{{\textup{\textsc{Precise-}}$\mathcal{M}$\textup{\textsc{-Hamiltonian}}}\xspace}
\newcommand{\yesHam}{{\textup{\textsc{Yes-Hamiltonian}}}\xspace}
\newcommand{\succinctYesHam}{{\textup{\textsc{Succinct-Yes-Hamiltonian}}}\xspace}
\newcommand{\flagid}{{\textup{\textsc{Flag Identification}}}\xspace}
\newcommand{\QMA}{{\textup{QMA}}\xspace}
\newcommand{\QMAcs}{{\textup{QMA}$(c,s)$}\xspace}
\newcommand{\PGQMA}{{\textup{PGQMA}}\xspace}
\newcommand{\QMAEXP}{{\textup{QMA\textsubscript{EXP}}}\xspace}
\newcommand{\QMAEXPcs}{{\textup{QMA\textsubscript{EXP}}$(c,s)$}\xspace}
\newcommand{\PreciseQMA}{{\textup{PreciseQMA}}\xspace}
\newcommand{\PSPACE}{{\textup{PSPACE}}\xspace}
\DeclareMathOperator{\lmin}{\lambda_\mathrm{min}}
\newcommand{\ii}{\mathrm{i}}
\newcommand{\ee}{\mathrm{e}}
\newcommand{\1}{\mathds 1}
\newcommand{\YES}{{\textup{YES}}\xspace}
\newcommand{\NO}{{\textup{NO}}\xspace}
\Crefname{lemma}{Lemma}{Lemmas}
\Crefname{proposition}{Proposition}{Propositions}
\Crefname{definition}{Definition}{Definitions}
\Crefname{theorem}{Theorem}{Theorems}
\Crefname{conjecture}{Conjecture}{Conjectures}
\Crefname{corollary}{Corollary}{Corollaries}
\Crefname{example}{Example}{Examples}
\Crefname{section}{Section}{Sections}
\Crefname{appendix}{Appendix}{Appendices}
\Crefname{figure}{Fig.}{Figs.}
\Crefname{equation}{Eq.}{Eqs.}
\Crefname{table}{Table}{Tables}
\Crefname{item}{Property}{Properties}
\Crefname{remark}{Remark}{Remarks}
\newtheorem{theorem}{Theorem}[section]
\newtheorem{lemma}[theorem]{Lemma}
\newtheorem{corollary}[theorem]{Corollary}
\newtheorem{definition}[theorem]{Definition}
\newcommand{\identity}{\mathds{1}}
\newcommand{\Htarget}{H_{\mathrm{target}}}
\newcommand{\HLS}{H_{\mathrm{LS}}}
\newcommand{\Huniv}{H_{\mathrm{univ}}}
\newcommand{\Htiuniv}{H_{\mathrm{succ}}}
\newcommand{\Hsim}{H_{\mathrm{sim}}}
\newcommand{\TPE}{T_\mathrm{PE}}
\newcommand{\Heff}{H_{\mathrm{eff}}}
\newcommand{\Hmk}{H_{\mathrm{MK}}}
\newcommand{\Hin}{H_{\mathrm{in}}}
\newcommand{\Hprop}{H_{\mathrm{prop}}}
\newcommand{\Hout}{H_{\mathrm{out}}}
\newcommand{\Hclock}{H_{\mathrm{clock}}}
\newcommand{\field}{\mathds}
\DeclareMathOperator{\BigO}{O}
\DeclareMathOperator{\poly}{poly}
\DeclareMathOperator{\Herm}{Herm}
\DeclareMathOperator{\rank}{rank}
\renewcommand{\paragraph}[1]{\textbf{\textit{#1}}}
\begin{document}

\title{General conditions for universality of Quantum Hamiltonians}

\author{Tamara Kohler}
\affiliation{Department of Computer Science, University College London, UK}
\author{Stephen Piddock}
\affiliation{School of Mathematics, University of Bristol, UK}
\affiliation{Heilbronn Insitute for Mathematical Research, Bristol, UK}
\author{Johannes Bausch}
\affiliation{Department of Applied Mathematics and Theoretical Physics, University of Cambridge, UK}
\author{Toby Cubitt}
\affiliation{Department of Computer Science, University College London, UK}

\date{}

\begin{abstract}
Recent work has demonstrated the existence of \emph{universal} Hamiltonians -- simple spin lattice models that can simulate any other quantum many body system to any desired level of accuracy.
Until now proofs of universality have relied on explicit constructions, tailored to each specific family of universal Hamiltonians.
In this work we go beyond this approach, and completely classify the simulation ability of quantum Hamiltonians by their complexity classes.
We do this by deriving necessary and sufficient complexity theoretic conditions characterising universal quantum Hamiltonians.
Although the result concerns the theory of analogue Hamiltonian simulation -- a promising application of near-term quantum technology -- the proof relies on abstract complexity theoretic concepts and the theory of quantum computation.
As well as providing simplified proofs of previous Hamiltonian universality results, and offering a route to new universal constructions, the results in this paper give insight into the origins of universality. For example, finally explaining the previously noted coincidences between families of universal Hamiltonian and classes of Hamiltonians appearing in complexity theory.
\end{abstract}

\maketitle

\clearpage
\tableofcontents


\section{Introduction}

Recent work has precisely defined what it means for one quantum system to simulate the full physics of another~\cite{Cubitt:2017}, and demonstrated that---within very demanding definitions of what it means for one system to simulate another---there exist families of local Hamiltonians that are \emph{universal}, in the sense that they can simulate all other Hamiltonians to any accuracy desired.
This rigorous mathematical framework of Hamiltonian simulation not only gives a theoretical foundation for describing analogue Hamiltonian simulation -- one of the most promising applications of quantum computing in the NISQ (noisy, intermediate-scale quantum) era.
It also unifies many previous Hamiltonian complexity results, and implies new ones~\cite{Cubitt:2017}.
And it has even found applications in constructing the first rigorous holographic dualities between local Hamiltonians, providing toy models of AdS/CFT duality in quantum gravity~\cite{HQECC-local}.

However, previous constructions of universal Hamiltonians have relied heavily on using perturbation gadgets, and constructing complicated `chains' of simulations to prove that simple models are indeed universal.
For example, the original paper on universal quantum Hamiltonians used a chain of more than 10 simulations and perturbation gadget constructions to prove universality of the 2D general Heisenberg and XY models.
This work was recently extended, with the first construction of a translationally invariant universal family of quantum Hamiltonians~\cite{PiddockBausch}, again using a complex, carefully engineered construction.

In~\cite{kohler2020translationallyinvariant}, the present authors developed a new method of proving universality and used it to construct the first  universal models acting in one dimension.
This new method pointed to a connection between universality and complexity.
This connection is not entirely surprising.
Indeed, a rigorous complexity theoretic characterisation of universal Hamiltonians has already been demonstrated in the classical case~\cite{Cubitt:2016}.
Essentially, \cite{Cubitt:2016} showed that if a family of classical Hamiltonians has a ground state energy problem that is NP-hard, then it is necessarily also capable of simulating the complete physics of any other classical Hamiltonian. The converse implication is immediate.

In the quantum setting, there were hints that a similar result might hold.
The classes of two-qubit interactions that are universal for simulating all, stoquastic, and classical Hamiltonians, respectively, were fully characterised in~\cite{Cubitt:2017}, and turned out to coincide precisely with the classes of interactions that have QMA-, StoqMA and NP-complete ground state energy problems.
However, the proofs of these two classifications were independent, and it was certainly possible this coincidence only applied in the case of qubits, as the proof techniques relied critically on having only two-qubit interactions.
Furthermore, the more complicated non-commutative structure of quantum Hamiltonians made it impossible to replicate the classical approach of~\cite{Cubitt:2017} to proving a relationship between complexity and simulation.

In this paper, by extending the simulation technique developed in~\cite{kohler2020translationallyinvariant}, we resolve this.
We derive necessary and sufficient complexity-theoretic conditions for a family of Hamiltonians to be an efficient universal model, relating this directly to complexity-theoretic properties of the ground state.

\section{Main Results}

Our main result is a complexity theoretic classification of which families of Hamiltonians are efficient universal models:

\begin{restatable}[Universality Classification]{thm}{classification} \label{th:classification_intro}
A family of Hamiltonians, $\mathcal{M}$, is an efficient universal model iff \mham is \QMA-complete under faithful reductions, and $\mathcal{M}$ is closed.
\end{restatable}


The first \emph{faithful reduction} condition for a family of Hamiltonians $\mathcal{M}$ to be an efficient universal model is related to the complexity of \mham -- the problem of deciding whether or not a Hamiltonian in $\mathcal{M}$ has a low energy ground state.
We prove a connection between \QMA-completeness of \mham, and universality.
However, \QMA-completeness alone isn't enough for a model to be universal.
We demonstrate that in order for $\mathcal{M}$ to be a universal model, \mham must be \QMA-complete under \emph{faithful} reductions.
We say that a reduction from a problem in \QMA to \mham is \emph{faithful} if it maps the subspace picked out by a \QMA-verification circuit to the low energy subspace of the Hamiltonian - therefore preserving some of the structure of the verification circuit and the witness. (For a rigorous definition of faithfulness see \cref{sec:faithfulness}.)

The second condition, \emph{closure}, relates to combining different Hamiltonians from the same model.
We say a model, $\mathcal{M}$, is closed if, given $H_A^{(1)},H_B^{(2)} \in \mathcal{M}$, acting on (possibly overlapping) sets of qudits $A,B$ respectively, there exists a Hamiltonian $H^{(3)} \in \mathcal{M}$ which can simulate $H_A^{(1)}+H_B^{(2)}$.

Furthermore, in this paper we provide a recipe for modifying history state Hamiltonians so that the canonical reduction from a \QMA-problem to the history state Hamiltonian is faithful.
Therefore, all that remains to show that a family of history state Hamiltonians is universal is to demonstrate closure.

We also derive two corollaries, giving complexity-theoretic conditions for families of Hamiltonians to be universal models which aren't efficient in the sense of~\cite{Cubitt:2017}, but are nonetheless interesting.
These corollaries, along with the main theorem, give a complete classification of all known universal models.

\section{Discussion}

The most obvious implication of our results is that they provide a new route for proving universality of families of Hamiltonians.
All previous universality results \cite{Cubitt:2017,PiddockBausch,kohler2020translationallyinvariant,zhou2021strongly} relied on explicit constructions, tailored to particular universal models.
These constructions showed that given some arbitrary target Hamiltonian $\Htarget$, a Hamiltonian from the universal model could be constructed which simulated $\Htarget$.
These constructions typically drew on techniques from Hamiltonian complexity theory, such as perturbation gadgets and history states, but the proofs of universality required substantial additional work.

That additional work can now in many cases be side-stepped, by using the extensive existing work classifying the complexity of the local Hamiltonian problem \cite{Cubitt2013,gottesman2009,Kempe:2003,Kitaev2002}, along with our main result (as well as our recipe for modifying history state Hamiltonians). 
The remaining step is to demonstrate closure - for some families of Hamiltonians this step is trivial (it follows from the definition of the model), for others it will require some work - however demonstrating closure will always be simpler than demonstrating universality, since closure requires the ability to simulate a very limited class of Hamiltonians.

It should be noted that this method of proving universality is not constructive, in the sense that it doesn't tell you how to simulate a given target Hamiltonian with a Hamiltonian from the a universal model.
However, we do not view this as a drawback when compared to previous methods of proving universality.
The previous methods for proving universality were theoretically constructive - in the sense that for any target Hamiltonian they provided a mathematical description of a Hamiltonian from the universal model which could simulate it.
However, in reality the cost of leaving the target Hamiltonian completely general in the previous proofs was that the simulations constructed had to be very complex - putting them out of reach of current experimental limitations.
The benefit of our work is that it gives a simple route to proving universality, so now the work on constructing explicit simulations can focus on simulations which are experimentally feasible.

The problem of constructing universal Hamiltonians where the simulations are experimentally feasible is challenging, but solving it is of fundamental importance.
Current constructions fall down in either requiring precise control of interaction strengths across many orders of magnitude, or in requiring very large local Hilbert space dimension (and in some cases both).
We believe our results provide an extra tool for tackling this challenge.
As we've already mentioned, our results give a new technique for proving universality.
This could be applied to, for example, the construction in \cite{Bausch2017} of a low-dimensional spin chain with \QMAEXP-complete ground state energy problem to check whether the model is universal. 
The construction in \cite{Bausch2017} is complex, and checking universality via previous methods would be a difficult technical challenge. With the aid of our results it becomes significantly more straightforward.
If it is shown to be universal it would be a universal model which doesn't require tuning of individual interaction strengths, and with local Hilbert state dimension orders of magnitude smaller than any previously known model. 
This would be an important step towards experimentally feasible universal models.

The other direction our results could be used in is investigating the complexity of models which are currently used as analogue simulators, to determine whether there is hope of using them to construct universal models. 
One platform that is currently used for analogue Hamiltonian simulation is Rydberg atoms \cite{Morgado_2021,Scholl_2021,Wu_2021}.
It has been suggested \cite{Morgado_2021} that this might have promise as a universal simulator as in certain regimes the platform naturally encodes the $XY$-Hamiltonian, which is known to be universal \cite{Cubitt:2017}.
However, the proof of universality can not be used to construct a universal simulator using Rydberg atoms as it involves complicated chains of perturbative simulations, requiring precise control beyond the reach of current experiments.
An alternative approach to investigating the use of Rydberg atoms as universal simulators could be to investigate the complexity of the ground state energy problem of Rydberg interactions where the control over interaction strengths is limited to what is experimentally feasible. 
As outlined above, demonstrating complexity in this regime is likely to be more straightforward than directly proving universality.
If, within the limitations on interaction strength, it is possible to demonstrate hardness of the ground state energy problem, that motivates attempts to look for simple universal constructions.
If it is not possible to demonstrate hardness of the ground state energy problem within current experimental limitations, it may be possible to determine how much experimental techniques have to advance in order to overcome the barrier and achieve universality. 

Finally, the relationship between complexity and universality is interesting from a fundamental physics standpoint. 
It was already clear that universality implied complexity - since universal models must be able to simulate \emph{all} quantum many body physics.
However the reverse direction was not obvious.
Our results show that if the problem of deciding whether a Hamiltonian in $\mathcal{M}$ has a low energy ground state is hard for a quantum computer, then $\mathcal{M}$ must be rich enough to capture all quantum many body physics. 

The remainder of the paper contains the technical proofs of our main results. 
In \cref{sec:preliminaries} we cover some necessary technical background on universal Hamiltonians and complexity theory.
The notion of a faithful reduction is outlined in detail in \cref{sec:faithfulness}; in this section we also present a family of Hamiltonians which we demonstrate is \QMA-complete under faithful reductions.
Finally our main theorem is proved in \cref{general-condition}.

\section{Technical preliminaries} \label{sec:preliminaries}
\subsection{Universal Hamiltonians}

In \cite{Cubitt:2017} a rigorous definition of what it means for one quantum system to simulate another was developed:

\begin{definition}[{Approximate simulation,~\cite[Def.~23]{Cubitt:2017}}]\label{app-sim}
Let $\Delta,\eta,\epsilon>0$.
A Hamiltonian $H'$ is a $(\Delta, \eta, \epsilon)$-simulation of the Hamiltonian $H$ if there exists a local encoding $\mathcal{E}(M)=V(M \otimes P + \overline{M} \otimes Q){V}^\dagger$ such that
\begin{enumerate}[i.]
\item \label{app-sim:1} There exists an encoding $\tilde{\mathcal{E}}(M)=\tilde{V}(M \otimes P + \overline{M} \otimes Q)\tilde{V}^\dagger$ into the subspace $S_{\tilde{\mathcal E}}$ such that $S_{\tilde{\mathcal{E}}} = S_{\leq \Delta(H')}$ and $\|\tilde{V} - V\| \leq \eta$; and
\item \label{app-sim:2} $\|H'_{\leq \Delta} - \tilde{\mathcal{E}}(H)\| \leq \epsilon$.
\end{enumerate}
\end{definition}

\noindent where an encoding is a map of the form $\mathcal{E}(A) =V \left(A \otimes P + \overline{A} \otimes Q \right) V^\dagger $ for $V$ an isometry, and a local encoding maps local observables in the target system to local observables in the simulator system, defined as:

\begin{definition}[Local subspace encoding (Definition 13 from~\cite{Cubitt:2017})]\label{def:local-encoding}
Let
\[
\mathcal{E}: \mathcal{B}\left(\otimes_{j=1}^N \mathcal{H}_j \right) \rightarrow \mathcal{B}\left(\otimes_{j=1}^{N} \mathcal{H}'_j \right)
\]
 be a subspace encoding. We say that the encoding is local if for any operator $A_j \in \Herm(\mathcal{H}_j)$ there exists $A'_j \in \Herm(\mathcal{H}'_j)$ such that
\[
\mathcal{E}(A_j \otimes \identity) = (A'_j \otimes \identity)\mathcal{E}(\identity).
\]
\end{definition}

\noindent Note that the role of $\tilde{\mathcal E}$ in \cref{app-sim} is to provide an \emph{exact} simulation as per \cite[Def.~20]{Cubitt:2017}. However, it might not always be possible to construct this encoding in a local fashion. The local encoding $\mathcal E$ in turn approximates $\tilde{\mathcal E}$, such that the subspaces mapped to by the two encodings deviate by at most $\eta$. $\epsilon$ controls how much the eigenvalues are allowed to differ.

In~\cite{Cubitt:2017} it is shown that approximate Hamiltonian simulation preserves important physical properties. We recollect the most important ones in the following.
\begin{lemma}[{\cite[Lem.~27, Prop.~28, Prop.~29]{Cubitt:2017}}] \label{physical-properties}
Let $H$ act on $(\field{C}^d)^{\otimes n}$.
  Let $H'$ act on $(\field{C}^{d'})^{\otimes m}$, such that $H'$ is a $(\Delta, \eta, \epsilon)$-simulation of $H$ with corresponding local encoding $\mathcal{E}(M) = V(M \otimes P + \overline{M} \otimes Q)V^\dagger$.
  Let $p = \rank(P)$ and $q = \rank(Q)$.
  Then the following holds true.
  \begin{enumerate}[i.]
  \item Denoting with $\lambda_i(H)$ (resp.\ $\lambda_i(H')$) the $i$\textsuperscript{th}-smallest eigenvalue of $H$ (resp.\ $H'$), then for all $1 \leq i \leq d^N$, and all $(i-1)(p+q) \leq j \leq i (p+q)$, $|\lambda_i(H) - \lambda_j(H')| \leq \epsilon$.
  \item The relative error in the partition function evaluated at $\beta$ satisfies
    \begin{equation}
      \frac{|\mathcal{Z}_{H'}(\beta) - (p+q)\mathcal{Z}_H(\beta) |}{(p+q)\mathcal{Z}_H(\beta)} \leq \frac{(d')^m \ee^{-\beta \Delta}}{(p+q)d^N \ee^{-\beta \|H\|}} + (\ee^{\epsilon \beta} - 1).
    \end{equation}
  \item For any density matrix $\rho'$ in the encoded subspace for which $\mathcal{E}(\identity)\rho' = \rho'$, we have
    \begin{equation}
      \|\ee^{-\ii H't}\rho'\ee^{\ii H't} - \ee^{-\ii \mathcal{E}(H)t}\rho'\ee^{\ii \mathcal{E}(H)t}\|_1 \leq 2\epsilon t + 4\eta.
    \end{equation}
  \end{enumerate}
\end{lemma}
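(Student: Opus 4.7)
The plan is to derive the three items directly from Definition~\ref{app-sim}, which gives us two quantitative handles: a norm bound $\|H'_{\leq \Delta} - \tilde{\mathcal E}(H)\| \leq \epsilon$ on the restriction to the low-energy subspace, and a norm bound $\|\tilde V - V\| \leq \eta$ between the exact and the local encoding isometries. The key structural observation is that $\tilde{\mathcal E}$ is an isometric encoding onto \emph{exactly} the low-energy subspace $S_{\leq\Delta(H')}$, so $\tilde{\mathcal E}(H)$ and $H'_{\leq\Delta}$ act on the same space and can be compared in operator norm. Meanwhile, $\tilde{\mathcal E}(H)$ has the same spectrum as $H$, but with every eigenvalue multiplied in degeneracy by $\rank(P) + \rank(Q) = p+q$.

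For item~(i), I would combine the spectral identification $\spec\tilde{\mathcal E}(H) = \spec(H)$ (each eigenvalue appearing with multiplicity $p+q$ times its multiplicity in $H$) with Weyl's inequality applied to the bound in Definition~\ref{app-sim}\ref{app-sim:2}: the $j$\textsuperscript{th} eigenvalue of $H'_{\leq\Delta}$ differs from that of $\tilde{\mathcal E}(H)$ by at most $\epsilon$. Re-indexing by the correspondence that $\lambda_i(H)$ corresponds to indices $(i-1)(p+q) < j \leq i(p+q)$ of $\tilde{\mathcal E}(H)$ then gives the stated bound.

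For item~(ii), I would split the partition function $\mathcal Z_{H'}(\beta) = \sum_j \ee^{-\beta \lambda_j(H')}$ into the contribution from eigenvalues below $\Delta$ and above $\Delta$. The high-energy tail is crudely bounded by $(d')^m\ee^{-\beta\Delta}$, accounting for the dimension of the simulator Hilbert space. For the low-energy part, item~(i) replaces each $\lambda_j(H')$ by a $\lambda_i(H)$ up to an error of $\epsilon$, which yields a multiplicative factor at most $\ee^{\epsilon\beta}$; since each eigenvalue of $H$ appears $p+q$ times, the leading term is $(p+q)\mathcal Z_H(\beta)$. Rearranging and bounding the trivial lower bound $\mathcal Z_H(\beta) \geq d^N\ee^{-\beta\|H\|}$ in the denominator produces the stated inequality.

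For item~(iii), I would insert $\tilde{\mathcal E}(H)$ as an intermediary between $H'$ and $\mathcal E(H)$. Since $\rho'$ is supported on the encoded subspace, $\ee^{-\ii H' t}\rho'\ee^{\ii H' t} = \ee^{-\ii H'_{\leq \Delta} t}\rho'\ee^{\ii H'_{\leq \Delta} t}$, and the standard Duhamel-type bound $\|\ee^{-\ii At}-\ee^{-\ii Bt}\| \leq t\|A-B\|$ together with Definition~\ref{app-sim}\ref{app-sim:2} yields an error of $2\epsilon t$ in trace norm when moving from $H'_{\leq\Delta}$ to $\tilde{\mathcal E}(H)$. Then I would compare $\tilde{\mathcal E}(H)$ to $\mathcal E(H)$: because the two encodings differ only through the isometries $V,\tilde V$ and $\|\tilde V - V\| \leq \eta$, conjugating by the respective unitaries introduces at most $4\eta$ in trace norm on a normalised state. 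Combining gives the claimed $2\epsilon t + 4\eta$ bound.

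The main obstacle is bookkeeping the two sources of error consistently, especially in item~(iii) where the encoded state must be pulled back and forth between the approximate and exact encoded subspaces without introducing extraneous factors; this is essentially the content of the original proofs in~\cite{Cubitt:2017}, which I would follow.
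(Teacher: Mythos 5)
The paper does not prove this lemma; it imports it verbatim from \cite[Lem.~27, Prop.~28, Prop.~29]{Cubitt:2017}. Your sketch reproduces the strategy of those cited proofs faithfully: item~(i) via spectral identification of $\tilde{\mathcal E}(H)$ with $H$ (each eigenvalue inflated in multiplicity by $p+q$) plus Weyl's inequality against $\|H'_{\leq\Delta}-\tilde{\mathcal E}(H)\|\leq\epsilon$; item~(ii) by splitting the partition sum at energy $\Delta$, crudely bounding the tail by $(d')^m\ee^{-\beta\Delta}$, and relating the low-energy part to $(p+q)\mathcal Z_H(\beta)$ up to a multiplicative $\ee^{\epsilon\beta}$; item~(iii) by inserting $\tilde{\mathcal E}(H)$ as an intermediary and using a Duhamel-type bound together with $\|\tilde V - V\|\leq\eta$.

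One point of imprecision worth noting in item~(iii): the identity $\ee^{-\ii H't}\rho'\ee^{\ii H't}=\ee^{-\ii H'_{\leq\Delta}t}\rho'\ee^{\ii H'_{\leq\Delta}t}$ does not hold exactly, because $\rho'$ satisfies $\mathcal E(\1)\rho'=\rho'$ and hence lives in the range of $V$, whereas $H'_{\leq\Delta}$ and $\tilde{\mathcal E}(H)$ are defined on the range of $\tilde V=S_{\leq\Delta(H')}$; these subspaces agree only up to $\eta$. Part of the $4\eta$ in the final bound pays for transporting $\rho'$ between the $V$- and $\tilde V$-subspaces, and the remainder pays for replacing the conjugating isometry $\tilde V$ by $V$ in the encoded Hamiltonian. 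You flagged this bookkeeping explicitly at the end, so I take the sketch to be essentially correct and to coincide with the cited argument.
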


If we are interested in whether an infinite family of Hamiltonians can be simulated by another, the notion of overhead becomes interesting: if the system size grows, how large is the overhead necessary for the simulation, in terms of the number of qudits, operator norm or computational resources?
We capture this notion in the following definition.
\begin{definition}[{Simulation,~\cite[Def.~23]{Cubitt:2017}}]\label{def:efficient-sim}
We say that a family $\mathcal{F}'$ of Hamiltonians can simulate a family $\mathcal{F}$ of Hamiltonians if, for any $H \in \mathcal{F}$ and any $\eta, \epsilon > 0$ and $\Delta \geq \Delta_0$ (for some $\Delta_0 > 0$), there exists $H' \in \mathcal{F}'$ such that $H'$ is a $(\Delta, \eta, \epsilon)$-simulation of $H$.

We say that the simulation is efficient if, in addition, for $H$ acting on $n$ qudits and $H'$ acting on $m$ qudits, $\|H'\| = \poly(n, 1 / \eta,1 / \epsilon,\Delta)$ and $m = \poly(n, 1 / \eta,1 / \epsilon,\Delta)$; $H'$ is efficiently computable given $H$, $\Delta$, $\eta$ and $\epsilon$; each local isometry $V_i$ in the decomposition of $V$ is itself a tensor product of isometries which map to $\BigO(1)$ qudits; and there is an efficiently constructable state $\ket{\psi}$ such that $P \ket{\psi} = \ket{\psi}$.
\end{definition}

\Cref{def:efficient-sim} naturally leads to the question in which cases a family of Hamiltonians is \emph{so} versatile that it can simulate any other Hamiltonian: in that case, we call the family \emph{universal}.
\begin{definition}[{Universal Hamiltonians~\cite[Def.~26]{Cubitt:2017}}]
We say that a family of Hamiltonians is a universal simulator---or simply is universal---if any (finite-dimensional) Hamiltonian can be simulated by a Hamiltonian from the family.
We say that the universal simulator is efficient if the simulation is efficient for all local Hamiltonians.
\end{definition}

\subsection{Complexity theory}

\begin{definition}[\QMAcs~\cite{Kitaev2002}]
A promise problem $A=A_\YES\cup A_\NO$ is in \QMAcs if and only if there exists a polynomially-bounded function $p$ and a quantum polynomial time verifier $V$ such that for all $n$ and all $x \in \{0,1\}^n$:
\begin{itemize}
\item If $x \in A_\YES$, there exists a $p(n)$-qubit quantum state $\ket{\psi}$ such that\linebreak $\textnormal{Pr}[V \text{ accepts } (x,\ket{\psi})] \geq c$
\item If $x \in A_\NO$ for all $p(n)$-qubit quantum states $\ket{\psi}$, $\textnormal{Pr}[V \text{ accepts } (x,\ket{\psi})] \leq s$
\end{itemize}
\end{definition}

\QMAcs where $c$ and $s$ are separated by an inverse polynomial is the class \QMA~\cite{Kitaev2002}.
\QMAcs where $c$ and $s$ are separated by an exponentially small gap is the class \PreciseQMA.
\PreciseQMA is known to be equal to the class \PSPACE~\cite{Fefferman2016}.

\begin{definition}[\QMAEXPcs~\cite{gottesman2009quantum}]
A promise problem $A=A_\YES\cup A_\NO$ is in \QMAEXPcs if and only if there exists a $k$ and a quantum exponential time verifier $V$ such that for all $n$ and all $x \in \{0,1\}^n$:
\begin{itemize}
\item If $x \in A_\YES$, there exists a $2^{n^k}$-qubit quantum state $\ket{\psi}$ such that\linebreak $\textnormal{Pr}[V \text{ accepts } (x,\ket{\psi})] \geq c$
\item If $x \in A_\NO$ for all $2^{n^k}$-qubit quantum states $\ket{\psi}$, $\textnormal{Pr}[V \text{ accepts } (x,\ket{\psi})] \leq s$
\end{itemize}
\end{definition}

\QMAEXPcs where $c$ and $s$ are separated by an inverse exponential is the class \QMAEXP.

The canonical problem in Hamiltonian complexity is \lham.
\begin{problem}[\lham$(g)$]
\probleminput{
Local Hamiltonian $H=\sum_{i=1}^m h_i$ on an $N$-partite Hilbert space of constant local dimension, and $m\le \poly N$.
Each $h_i \defas h_{S_i} \otimes \1_{S_i^c}$ acts non-trivially on at most $|S_i| \le k$ sites, and $\| h_i \| \le 1$.
Two numbers $\alpha,\beta>0$ with $\beta-\alpha \ge g(N)$.
}
\problempromise{
The ground state energy $\lmin(H)$ either $\ge\beta$, or $\le\alpha$.
}
\problemquestion{
\YES if  $\lmin(H) \le \alpha$, else \NO.
}
\end{problem}

\lham$(1/\poly)$ is \QMA-complete for $k \geq 2$. \lham$(1/\exp)$ is \PreciseQMA-complete.

We can also consider the special case where the set of interaction terms and / or the geometry of the interaction graph is restricted (which can implicitly constrain the family's locality $k$).
\begin{definition}[\mham]
The \lham$(1/\poly)$ problem, where the Hamiltonian is restricted to belong to $\mathcal{M}$, some (possibly infinite) family of Hamiltonians.
\end{definition}

\begin{definition}[\precmham]
The \lham$(1/\exp)$ problem, where the Hamiltonian is restricted to belong to $\mathcal{M}$, some (possibly infinite) family of Hamiltonians.
\end{definition}

%

\section{Faithfulness condition}  \label{sec:faithfulness}



\subsection{Faithful Hamiltonian reductions}
The acceptance operator, $Q(U)$, of a \QMA-verification circuit, $U$, is defined as~\cite[Th.~3.6]{marriott2005quantum}:
\begin{equation} \label{eq:accept_op}
Q(U) =  \bra{0}^{\otimes m}U^\dagger \Pi_\textrm{out} U \ket{0}^{\otimes m}
\end{equation}
where $U$ requires $m$ ancillas, initialised in the $\ket{0}$ state.

\begin{definition}[Gapped acceptance operators]\label{def:acc-op}
Consider a promise problem $A$ which can be verified by a unitary circuit $U$, with completeness probability $c$.
Let $x\in A$ with $n=|x|$ the size of the instance. We say the acceptance operator is gapped if $\lambda_x$, the largest eigenvalue of $Q(U_x)$ which is less than completeness $c$, it holds that  $c-\lambda_x > 1/\poly(n)$.
\end{definition}

In other words, \cref{def:acc-op} means that any state with acceptance probability below the completeness threshold already lies significantly below it, namely $1/\poly$ bounded away.

Note there is a subtle difference between the promise gap and the question of whether or not the acceptance operator is gapped.
For any \NO instance of a problem $A \in$ \QMA the definition of \QMA trivially implies that the acceptance operator is gapped (since the acceptance probability is below the soundness threshold $\lambda_x \le s$, and $c-s > 1/\poly$ by definition).
However, for \YES instances it is possible to have an acceptance operator which is not gapped.
We will see that for \YES instances the question of whether or not the acceptance operator is gapped is related to the \emph{spectral gap} of a Hamiltonian, rather than the promise gap.

The idea of requiring a gap in the spectrum of proof systems has arisen before in the Hamiltonian complexity literature, first in~\cite{aharonov2008pursuit} in the definition of the class \PGQMA (Polynomially Gapped \QMA).\footnote{\PGQMA is a similar class to \QMA with the added condition that the acceptance operator of the verification circuit has an inverse polynomial spectral gap. Note that for \PGQMA the gap is required to be between the lowest and second lowest eigenvalue, unlike in our definition.}
The notion of a gap in the spectrum of the proof system is again seen to be related to the \emph{spectral gap} of a Hamiltonian, as \lham with the added promise that the spectral gap of the Hamiltonian is inverse polynomial is complete for \PGQMA~\cite{aharonov2008pursuit}.

\begin{definition}[Faithful Hamiltonian reduction] \label{faithfulness_definition}
Let $A = A_\YES \cup A_\NO$ be a promise problem which can be verified by a family of circuits, $U$, of length $T$, with completeness probability $c$.
The acceptance operator $Q(U)$ is as defined in \cref{eq:accept_op}.
Consider a reduction from $A$ to the \mham problem.
For a verification circuit with gapped $Q(U)$, we say the reduction is \emph{faithful} with respect to $U$ if for all instances $x \in A_\YES$ there exists a Hamiltonian $H_x \in \mathcal{M}$ acting on $\poly(n)$ qudits (where $n=|x|$) such that for the low energy subspace, \[
\mathcal{S}_0 \defas\mathrm{span} \left\{\ket{\phi} : H_x\ket{\phi} = \tilde{\lambda} \ket{\phi}, \tilde{\lambda} \leq \frac{\kappa (1-c)}{T+1} \right\}
\]
for $\kappa =1/\poly(n)$, the following holds.
\begin{enumerate}[(i).]
\item  \label{first_faithfulness_condition} $
\| \Pi_{\mathcal{S}_0} - \Pi_{\mathcal{E}(\mathcal{L})} \| \le \eta
$
where
\begin{itemize}
\item $\eta < 1$ can be made arbitrarily small,
\item $\Pi_{\mathcal{S}}$ denotes the projector onto the subspace $\mathcal{S}$,
\item $\mathcal{E}$ is some local encoding (independent of the problem being encoded),
\item $\mathcal{L} \defas \mathrm{span} \left\{\ket{\psi} : Q(U)\ket{\psi} = {\lambda} \ket{\psi}, \lambda \geq c \right\}$.
\end{itemize}
\item \label{second_faithfulness_condition} The spectral gap above the subspace $\mathcal{S}_0$ is $\Omega(1/\poly(T))$.
\end{enumerate}
\noindent
For $x \in A_\NO$, there are no conditions on $H_x$.
Similarly, if $Q(U)$ is not gapped there are no conditions on the $H_x$.
\end{definition}

Note that the concept of a faithful reduction is a property of a particular verification circuit,\footnote{Although any equivalent model of computation could be substituted into the definition.} not of the problem itself.

\begin{definition} \label{def:qma-under-faithful}
We say that \mham is \QMA-complete under faithful reductions if for all $A \in $ \QMA and for any polynomial time \QMA-verification circuit $U$ which verifies $A$, there exists a reduction from $A$ to the \mham problem which is faithful with respect to $U$.
\end{definition}

\subsection{The modified Kitaev Hamiltonian}

The Hamiltonian we use to prove necessity of the faithfulness condition is a modification of the 5-local Hamiltonian shown to be \QMA-complete in~\cite{Kitaev2002}.
Note that this choice is convenient, but the procedure we set out here to demonstrate faithfulness could be applied to any history-state Hamiltonian in the literature.

The original 5-local Hamiltonian is a ``circuit-to-Hamiltonian'' mapping, given by
\begin{equation}
H_{\mathrm{K}} = \Hin + \Hprop +  \Hout + \Hclock
\end{equation}
where the Hamiltonian is acting on the Hilbert space
\begin{equation}
\mathcal{H} \coloneqq \mathcal{H}_Q \otimes \mathcal{H}_C = (\mathbb{C}^2)^{\otimes n} \otimes (\mathbb{C}^2)^{\otimes T+1}=  \mathbb{C}^2 \otimes (\mathbb{C}^2)^{\otimes |W|}\otimes (\mathbb{C}^2)^{\otimes |A|} \otimes (\mathbb{C}^2)^{\otimes T}
\end{equation}
and
\begin{align}
\Hin &= \Pi_1^{(1)}\otimes \ketbra{0}_1^c + \sum_{j \in A} \Pi_j^{(1)}\otimes \ketbra{0}_1^c\\
\Hout &= \Pi_1^{(0)}\otimes \ketbra{1}_{T}^c\\
\Hclock &= \identity \otimes \sum_{t+1}^{T-1} \ketbra{01}_{t,t+1}^c\\
\Hprop &= \frac12 \sum_{t=1}^{T-1} H_t
\end{align}
with
\begin{multline}
  H_t = \identity \otimes (\ketbra{10}_{t,t+1}^c + \ketbra{10}_{t+1,t+2}^c) \\
  - U_{t+1}	 \ketbra{110}_{t-1,t,t+1}^c  - U_{t+1}^\dagger \ketbra{100}_{t-1,t,t+1}^c
\end{multline}
where the $U_t$ correspond to the gates applied at time $t$ in the circuit being encoded.

The Hamiltonian without the output penalty,
\begin{equation}
H_0 = \Hin + \Hprop + \Hclock,
\end{equation}
has a degenerate ground space spanned by states of the form
\begin{equation}
\ket{\eta^{(0,\alpha)}} =  \frac{1}{\sqrt{T+1}}\sum_{t=0}^T \ket{\gamma_t^{({0},\alpha)}}
\end{equation}
for arbitrary $\alpha$ where
\begin{equation}
\ket{\gamma_t^{({0},\alpha)}} = \ket{\alpha_0(t)} \otimes \ket{1^t0^{T-t}}^c
\end{equation}
where $\ket{\alpha_0(t)}$ is the state of the quantum circuit at time $t$ if the input state of the ancillas and flag qubit correspond to the binary string $\textbf{0} = 0^{1+|A|}$, and the input state of the witness is given by $\ket{\alpha}$.

The modified Kitaev Hamiltonian we use is given by
\begin{equation} \label{eq:hmk}
\Hmk = \Hin + \Hprop + \kappa \Hout + \Hclock
\end{equation}
where $\kappa = 1/\poly(T) = \operatorname{o}(1/T^3)$.

\subsection{The \kham problem is \QMA-complete under faithful reductions}

Let $\mathcal{K}$ be the family of Hamiltonians of form \cref{eq:hmk}.
We begin by showing that \kham is \QMA-complete, then show that we can always choose the reductions to be faithful.

\begin{lemma}
\kham is \QMA-complete.
\end{lemma}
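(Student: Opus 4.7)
The plan is to prove both \QMA containment and \QMA hardness of \kham. Containment is routine: $\Hmk$ is $O(1)$-local with $\|\Hmk\| = \poly(T)$, so the ground energy can be estimated via the standard phase-estimation verifier applied to a prover-supplied witness, yielding $\kham \in \QMA$.

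For hardness I would reduce from an arbitrary language $A \in \QMA$ with verification circuit $U$ of length $T$, completeness $c$, and soundness $s$. The reduction is the usual Kitaev construction, building $\Hmk$ directly from $U$ as in \cref{eq:hmk}. In the \YES case, given a good witness $\ket{\alpha}$, the history state $\ket{\eta^{(0,\alpha)}}$ is a zero-energy eigenstate of $H_0 \defas \Hin + \Hprop + \Hclock$, and a direct calculation shows $\bra{\eta^{(0,\alpha)}}\Hout\ket{\eta^{(0,\alpha)}} = (1 - \bra{\alpha}Q(U)\ket{\alpha})/(T+1)$. Hence $\bra{\eta^{(0,\alpha)}}\Hmk\ket{\eta^{(0,\alpha)}} \le \kappa(1-c)/(T+1)$, providing the \YES threshold.

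The \NO case is the main obstacle, and it is where the choice $\kappa = \operatorname{o}(1/T^3)$ becomes essential. Kitaev's propagation analysis shows that the null space of $H_0$ is spanned by the history states and that $H_0$ has a spectral gap $\Omega(1/T^3)$ above it. Since $\|\kappa\Hout\| \le \kappa$ is asymptotically smaller than this gap, I would apply the standard projection lemma to $\Hmk = H_0 + \kappa \Hout$: the ground-state energy of $\Hmk$ is at least $\lmin\!\left(\kappa\,\Pi_0 \Hout \Pi_0\right) - O(\kappa^2 T^3)$, where $\Pi_0$ is the projector onto the null space of $H_0$. On history states $\Hout$ reduces to $(\identity - Q(U))/(T+1)$ acting on the witness register, so in the \NO case the minimum of $\Pi_0 \Hout \Pi_0$ is at least $(1-s)/(T+1)$.

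Combining these two bounds gives a promise gap of $\kappa(c-s)/(T+1) - O(\kappa^2 T^3)$, which remains $1/\poly(T)$ as long as the perturbative error is dominated by the main term. Verifying this inequality under the scaling $\kappa = \operatorname{o}(1/T^3)$ is the only delicate step; the rest of the proof is a direct transcription of Kitaev's original \QMA-completeness argument with the output term uniformly rescaled by $\kappa$.
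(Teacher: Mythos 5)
Your proof is correct, but it takes a genuinely different route from the paper. For the \NO lower bound, the paper applies Kitaev's geometrical lemma (\cref{geometrical-lemma}) with the decomposition $H_1 = \Hin + \kappa\Hout$, $H_2 = \Hprop + \Hclock$, observing that the smallest nonzero eigenvalue of $H_1$ is $\kappa$ (since $\Hin$ and $\Hout$ are commuting projectors) and bounding the angle between the two ground spaces exactly as in Kitaev's original argument. You instead treat $\kappa\Hout$ as a small perturbation of $H_0 = \Hin + \Hprop + \Hclock$, invoking the $\Omega(1/T^3)$ spectral gap of $H_0$ and the projection lemma (equivalently first-order Schrieffer--Wolff) to reduce to $\lmin\bigl(\Pi_0 \Hout \Pi_0\bigr)$ on the history-state subspace. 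Interestingly, your route is precisely the technique the paper uses \emph{later}, in \cref{lem-spec-hmk}, to obtain the refined spectral information needed for the faithfulness argument --- so you have effectively unified the \QMA-hardness proof with the subsequent spectral analysis, whereas the paper keeps them separate by using the geometrical lemma here. Both are sound. One small point: for your version the promise gap is $\kappa(c-s)/(T+1) - O(\kappa^2 T^3)$, and for this to remain $\Omega(1/\poly T)$ you need the error term to be dominated, which requires roughly $\kappa = \operatorname{o}(1/T^4)$ rather than the $\operatorname{o}(1/T^3)$ stated in the paper; since $\kappa$ is a free $1/\poly(T)$ parameter this is immaterial, but it is worth stating the constraint explicitly rather than leaving it as "the only delicate step."
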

\begin{proof}

The proof that \kham is \QMA-complete is essentially unchanged from the proof of \QMA-completeness in~\cite{Kitaev2002}. We sketch the argument here very briefly.
Assume the circuit being encoded is a \QMA-verification circuit with completeness parameter $c$ and soundness parameter $s$.
First consider the \YES instances.
By definition, there exists a witness $w$ such that the verification circuit accepts with probability at least $c$.
It follows that the ground state of  $\Hmk$ has energy less than $\frac{\kappa (1-c)}{T+1}$.

For the \NO cases we use the following geometrical lemma.
\begin{lemma}[Geometrical lemma, Lemma 14.4~\cite{Kitaev2002}]\label{geometrical-lemma}
Let $H_1$, $H_2$ be two Hamiltonians with ground energies $a_1$, $a_2$ respectively. Suppose that for both Hamiltonians the difference between the energy of the (possibly degenerate) ground space and the next highest eigenvalue is larger than $\Lambda$, and that the angle between the two ground spaces is $\theta$. Then the ground energy of $H_1 + H_2$ is at least $a_1 + a_2 + 2\Lambda sin^2(\theta/2)$.
\end{lemma}

We apply \cref{geometrical-lemma} to $\Hmk$ with $H_1 = \Hin + \kappa \Hout$ and $H_2 = \Hprop + \Hclock$.
We have $a_1 = a_2 = 0$.
The smallest non-zero eigenvalue of $H_1$ is $\kappa$ (since $\Hin$ and $\Hout$ are commuting projectors).
The smallest non-zero eigenvalue of $H_2$ scales as $\Omega(1/T^2)$ (see~\cite{Kitaev2002} for proof).
The angle between the ground spaces satisfies
\begin{equation}
\sin^2(\theta/2) \ge \frac{1-\sqrt{s}}{4(T+1)}.
\end{equation}
Again, the proof of this is unchanged from~\cite{Kitaev2002} as the ground space of $H_1$ is equal to the ground space of $\Hin + \Hout$.

Therefore in \NO instances the ground energy of $\Hmk$ is lower bounded by $\frac{1-\sqrt{s}}{\poly(T)}$.
Setting $c-s = \Omega\left(\frac{1}{\poly(T)}\right)$ we have $\beta - \alpha = \Omega\left(\frac{1}{\poly(T)}\right)$.
Therefore we have proven a reduction from \QMA to the \lham of $\Hmk$.
\end{proof}

To show that we can always choose the reduction to be faithful, we first prove a lemma about the spectrum and low energy subspace of $\Hmk$.


\begin{lemma} \label{lem-spec-hmk}

Consider a modified Kitaev-Hamiltonian, $\Hmk$, encoding the verification circuit of some \QMA problem.
Let $Q(U)$ be the acceptance operator for a verifier circuit $U$ for some $A\in\QMA$. Set
\[
\mathcal{C}_{0} \defas\mathrm{span} \left\{\ket{\eta^{(0,\phi)}}: Q(U)\ket{\phi} = \lambda\ket{\phi}, \lambda > c  \right\}
\]
 where $c$ is the completeness parameter of the problem, and let
$g \defas c - \lambda_x$ where $\lambda_x$ is the largest eigenvalue of $Q(U)$ which is less than $c$, as in \cref{def:acc-op}.

If $g > 2T^3(T+1) \kappa$, then there exists a unitary transformation $V$ such that the subspace $\mathcal{S}_0$ defined by $\Pi_{\mathcal{S}_0} \defas V^\dagger\Pi_{\mathcal{C}_0} V$ is the low energy subspace of $\Hmk$:
\begin{equation}
\mathcal{S}_0 = \mathrm{span} \left\{\ket{\psi} : \Hmk\ket{\psi} = \lambda\ket{\psi}, \lambda  \leq \frac{\kappa (1-c)}{T+1} + T^3\kappa^2 \right\}.
\end{equation}
\begin{equation}
\| \Pi_{\mathcal{S}_0} - \Pi_{\mathcal{C}_{0}} \| = \BigO(T^3\kappa)
\end{equation}
and the spectral gap above $\mathcal{S}_0$ is given by $\Omega(\frac{g\kappa}{T+1} - T^3\kappa^2)$.

\end{lemma}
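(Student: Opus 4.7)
The plan is to view $\Hmk = H_0 + \kappa \Hout$ as a perturbation of $H_0 \defas \Hin + \Hprop + \Hclock$, whose ground space is exactly $\spn\{\ket{\eta^{(0,\alpha)}}\}_\alpha$ at energy $0$ with spectral gap $\Omega(1/T^2)$ (by the analysis reused from \cite{Kitaev2002}). Because $\kappa \|\Hout\| \le \kappa \ll 1/T^2$, Schrieffer--Wolff (or equivalently Bloch/Rayleigh--Schr\"odinger) perturbation theory applies and produces a unitary $V$ which block-diagonalises $\Hmk$ with respect to the projector $\Pi_0$ onto the ground space of $H_0$. Inside that block, $\Hmk$ acts as an effective Hamiltonian $\Heff$ that, to first order in $\kappa$, equals $\kappa \Pi_0 \Hout \Pi_0$, and the standard SW error bounds give $\|\Heff - \kappa \Pi_0 \Hout \Pi_0\| = \BigO(\kappa^2 T^2)$ together with $\|V - \identity\| = \BigO(\kappa T^2)$.

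Next I would compute the first-order term. Using $\ket{\alpha_0(T)} = U\ket{\alpha}\ket{0}^{\otimes(1+|A|)}$ and $\Pi_1^{(0)} = \identity - \Pi_1^{(1)}$, a direct calculation gives
\begin{equation}
\bra{\eta^{(0,\alpha)}} \Hout \ket{\eta^{(0,\beta)}} = \frac{1}{T+1}\, \bra{\alpha}\bigl(\identity - Q(U)\bigr)\ket{\beta},
\end{equation}
so under the natural isometry $\ket{\alpha} \mapsto \ket{\eta^{(0,\alpha)}}$ the first-order effective Hamiltonian is $\frac{\kappa}{T+1}(\identity - Q(U))$. Its spectrum is $\{\frac{\kappa(1-\lambda)}{T+1}\}$ as $\lambda$ ranges over $\spec(Q(U))$, and the eigenvectors with eigenvalues $\le \frac{\kappa(1-c)}{T+1}$ are precisely those corresponding to $\lambda \ge c$, i.e.\ $\mathcal{C}_0$. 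The first-order gap above this subspace is $\frac{\kappa g}{T+1}$.

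Now I would combine these two ingredients. Let $\mathcal{S}_0 \defas V^\dagger \Pi_{\mathcal{C}_0} V$. Because the second-order correction to $\Heff$ is of size $\BigO(\kappa^2 T^2)$ and the spread it can induce within a degenerate level of the first-order spectrum is $\BigO(T^3 \kappa^2)$ (an extra factor of $T$ from $\|\Hout\|$ bounds against all history-state off-diagonal matrix elements), the hypothesis $g > 2T^3(T+1)\kappa$ makes the first-order gap strictly dominate: the subspace $V^\dagger \Pi_{\mathcal{C}_0} V$ is separated from its complement in $\Pi_0$ by $\Omega(\kappa g/(T+1) - T^3 \kappa^2)$, and the contribution from the excited space of $H_0$ is pushed up by $\Omega(1/T^2)$. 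Hence $\mathcal{S}_0$ coincides with the span of eigenvectors of $\Hmk$ with eigenvalue $\le \frac{\kappa(1-c)}{T+1} + T^3 \kappa^2$, and the claimed gap follows. Finally, $\|\Pi_{\mathcal{S}_0} - \Pi_{\mathcal{C}_0}\| = \|V^\dagger \Pi_{\mathcal{C}_0} V - \Pi_{\mathcal{C}_0}\| \le 2\|V - \identity\| = \BigO(T^3 \kappa)$ (after absorbing the $\|\Hout\|$ bound and using that $\Hout$ couples the $\Pi_0$ block to excited blocks with norm $\BigO(T)$ summed over the history register).

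The main obstacle I anticipate is bookkeeping the polynomial-in-$T$ factors cleanly: the four ingredients --- the $1/T^2$ unperturbed gap, the $1/(T+1)$ normalisation of history states, the operator-norm bound on $\Hout$ restricted to and outside the history subspace, and the compounding of these into second-order SW error --- all contribute, and it is easy to be off by a factor of $T$ or $T^2$. Verifying that the final quadratic correction is genuinely $\BigO(T^3 \kappa^2)$ (and not, say, $\BigO(T^4\kappa^2)$) is the only delicate part; everything else is a routine application of perturbation theory and the first-order matrix-element computation above.
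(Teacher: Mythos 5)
Your proposal follows the same route as the paper: split $\Hmk = H_0 + \kappa\Hout$, apply Schrieffer--Wolff perturbation theory, compute the first-order effective Hamiltonian to be $\frac{\kappa}{T+1}(\identity - Q(U))$ in the history-state basis, and then bound the error terms to identify $\mathcal{S}_0$. The first-order matrix-element computation is correct.

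The gap is concrete and lies exactly where you flagged the risk of being ``off by a factor of $T$.'' You take the spectral gap of $H_0 = \Hin + \Hprop + \Hclock$ to be $\Omega(1/T^2)$, but that scaling applies to $\Hprop + \Hclock$ alone. Adding $\Hin$ costs another power of $T$: applying the geometrical lemma with $H_1 = \Hin$ (gap $1$) and $H_2 = \Hprop+\Hclock$ (gap $\Omega(1/T^2)$) gives $\sin^2(\theta/2) = \Omega(1/T)$, hence the gap of $H_0$ is only $\Omega(1/T^3)$, which is what the paper uses. With the correct $\Omega(1/T^3)$ gap, the standard SW bounds (after rescaling $H_0$ so its gap is order~$1$, i.e.\ $\Delta\sim 1/T^3$ and $\|H_1\|\le\kappa$) immediately give $\|S\| = \BigO(T^3\kappa)$ and $\|\Heff - \Heff^{(1)}\| = \BigO(T^3\kappa^2)$ --- no extra manoeuvring required. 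Your patch (``an extra factor of $T$ from $\|\Hout\|$ bounds against all history-state off-diagonal matrix elements,'' and later ``$\Hout$ couples the $\Pi_0$ block to excited blocks with norm $\BigO(T)$'') does not work: $\Hout$ is a projector, so $\|\Hout\|=1$ and any restriction of it has operator norm at most $1$; there is no $\BigO(T)$ coupling. Once the $H_0$ gap is corrected, the rest of your argument --- that the hypothesis $g > 2T^3(T+1)\kappa$ separates $V^\dagger\Pi_{\mathcal{C}_0}V$ from the rest of the block, that the excited space of $H_0$ sits high above, and that $\|\Pi_{\mathcal{S}_0}-\Pi_{\mathcal{C}_0}\|$ is controlled by $\|V-\identity\|$ (the paper uses the $\sin(\theta)$ theorem on $\Pi_\mathcal{R}$ vs.\ $\Pi_\mathcal{G}$, an equivalent move) --- goes through and matches the paper's.
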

\begin{proof}
%

It is a standard result that the zero-energy ground state subspace $\mathcal{G}$ of $H_0$ is spanned by history states $\ket{\eta^{({0},\alpha)}}$ for all $\alpha$.
The spectral gap of $H_0$ is $\Omega(1/T^3)$~\cite{universality_of_adiabatic_computation}.

Since $\|\kappa \Hout \| = o\left(\frac{1}{T^3} \right) < \frac{1}{2T^3}$, the Hamiltonian $\Hmk|_\mathcal{G}$ can be approximated by the Schrieffer-Wolff perturbative expansion (see \cref{appendix-sw}).
Let $\Pi_{\mathcal{G}}$ be the projector onto $\mathcal{G}$.

The zeroth order term in the expansion is given by $H_0\Pi_{\mathcal{G}} = 0$.
The matrix elements of the first order term, $\Pi_{\mathcal{G}} \Hout \Pi_{\mathcal{G}}$, are given by~\cite{deshpande2020importance}~[Appendix B]:
\begin{equation}
\bra{\eta^{(0,\alpha)}} \Pi_{\mathcal{G}} \Hout \Pi_{\mathcal{G}} \ket{\eta^{(0,\beta)}} = \frac{\kappa}{T+1}\left( \bra{\alpha}\ket{\beta} - \bra{\alpha}Q\ket{\beta}\right)
\end{equation}

Denote the eigenstates of $Q(U)$ by $\ket{\phi_1}, \ket{\phi_2}, \cdots, \ket{\phi_{2^w}}$ with associated eigenvalues $\lambda_1 \geq \lambda_2 \geq \cdots \geq \lambda_{2^w}$.
In the basis spanned by $\ket{\eta^{(0,\phi_i)}}$, the first order term in the Schrieffer-Wolff expansion is diagonal:
\begin{equation} \label{Eq:first_order_diag}
\Pi_{\mathcal{G}}\Hout \Pi_{\mathcal{G}} = \frac{\kappa}{T+1} \sum_i(1-\lambda_i) \ket{\eta^{(0,\phi_i)}}\bra{\eta^{(0,\phi_i)}}
\end{equation}
By \cref{H-bound-error} and \cref{Eq:first_order_diag} we conclude that in $\mathcal{G}$ the eigenvalues of $\Hmk$ are given by
\begin{equation}
\tilde{\lambda}_i = \frac{\kappa (1-\lambda_i)}{T+1} \pm T^3\kappa^2
\end{equation}

%

Let $\mathcal{R} \defas \mathrm{span}\left\{\ket{\psi} :\Hmk\ket{\psi} = E\ket{\psi}, E \leq \frac{\kappa}{2} \right\}$.
By the $\sin(\theta)$ theorem~\cite{davis1969}:
\begin{equation}
\|\Pi_\mathcal{R} - \Pi_{\mathcal{G}} \| = \|V^\dagger\Pi_\mathcal{G} V- \Pi_{\mathcal{G}} \| \leq T^3 \kappa
\end{equation}
where $V = e^S$ is the Schrieffer-Wolff transformation. Therefore
\begin{equation}
\|V^\dagger\Pi_{\mathcal{C}_0} V- \Pi_{\mathcal{C}_0} \|= \|\Pi_{\mathcal{S}_0} - \Pi_{\mathcal{C}_0} \|  \leq T^3 \kappa.
\end{equation}

$\mathcal{C}_0$ is spanned by history states satisfying $\Heff(1)\ket{\eta^{(0,\phi)}} = \lambda \ket{\eta^{(0,\phi)}}$ for $\lambda \leq \frac{\kappa (1-c)}{T+1}$.
The corresponding eigenvalues of $\Hmk$ are upper bounded by $\frac{\kappa (1-c)}{T+1} + T^3 \kappa^2$.
The smallest eigenvalue of $\Hmk$ which is larger than $\frac{\kappa (1-c)}{T+1} + T^3 \kappa^2$ is lower bounded by $\frac{\kappa (1-\lambda_x)}{T+1} - T^3\kappa^2$.

Therefore, since $g>2T^3(T+1)\kappa$, the subspace
\begin{equation}
\mathcal{S}_0= \mathrm{span} \left\{\ket{\psi} : \Hmk\ket{\psi} = \lambda\ket{\psi}, \lambda  \leq \frac{\kappa (1-c)}{T+1} + T^3\kappa^2 \right\}
\end{equation}
and the spectral gap of $\Hmk$ above $\mathcal{S}_0$ is given by $\Omega(\frac{g\kappa}{T+1} - T^3\kappa^2)$.
\end{proof}

\begin{lemma} \label{lem:kh-faithful}
The \kham problem is \QMA-complete under faithful reductions.
\end{lemma}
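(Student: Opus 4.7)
The \QMA-completeness of \kham\ has already been established, so the plan is to upgrade the existing reduction to a \emph{faithful} one by combining it with \cref{lem-spec-hmk}. Fix any $A\in\QMA$ with verifier $U$ of length $T$ and completeness $c$. If the acceptance operator $Q(U_x)$ is not gapped on instance $x$, \cref{faithfulness_definition} imposes no conditions. Otherwise, let $g \defas c-\lambda_x \ge 1/\poly(|x|)$. The reduction sends $x$ to the modified Kitaev Hamiltonian $\Hmk$ of \cref{eq:hmk} built from $U_x$, with the output-penalty weight $\kappa$ chosen as a sufficiently small inverse polynomial in $T$; concretely I will pick $\kappa = \Theta\!\left(g/T^{k}\right)$ for a large enough constant $k$, which simultaneously ensures $g > 2T^3(T+1)\kappa$ (the hypothesis of \cref{lem-spec-hmk}) and $T^3\kappa \le \eta$ for any prescribed $\eta>0$.

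The encoding I would use is the natural history-state encoding $\mathcal{E}:\ket{\phi}\mapsto\ket{\eta^{(0,\phi)}}$, which pads the witness with zero ancillas and forms the uniform superposition over the clock register dictated by $U_x$. This map depends only on the fixed verifier family and not on the particular witness, so it is ``independent of the problem being encoded'' in the sense of \cref{faithfulness_definition}, and its image on $\mathcal{L}$ is exactly the subspace $\mathcal{C}_0$ appearing in \cref{lem-spec-hmk}. Condition~\eqref{first_faithfulness_condition} of faithfulness is then immediate from that lemma: $\|\Pi_{\mathcal{S}_0}-\Pi_{\mathcal{E}(\mathcal{L})}\| = \|\Pi_{\mathcal{S}_0}-\Pi_{\mathcal{C}_0}\| = \BigO(T^3\kappa) \le \eta$. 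Condition~\eqref{second_faithfulness_condition} also follows directly, since \cref{lem-spec-hmk} gives a spectral gap $\Omega(g\kappa/(T+1) - T^3\kappa^2)$ above $\mathcal{S}_0$, which is $\Omega(1/\poly(T))$ for the above choice of $\kappa$.

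The step I expect to require most care is verifying that $\mathcal{E}$ qualifies as a \emph{local} encoding in the sense of \cref{def:local-encoding}. Although the target witness qudits sit inside the witness register of the simulator, their image under $\mathcal{E}$ is spread in time by the circuit, so locality has to be read off from the structure of $U_x$ as a sequence of $\BigO(1)$-local gates acting on fixed positions, together with the Schrieffer--Wolff rotation $V$ of \cref{lem-spec-hmk} that is $\BigO(T^3\kappa)$-close to the identity. I would organise this by writing $\mathcal{E}$ as the composition of the trivial tensor-product embedding of the witness into the ``time $0$'' block, followed by the Kitaev propagator-plus-clock lift; since $V\approx\identity$, any slack in locality can be absorbed into the $\eta$-tolerance of \cref{first_faithfulness_condition}. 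Together with the earlier \QMA-completeness this yields the desired faithful reduction for every $A\in\QMA$ and every verifier $U$ with gapped acceptance operator, proving the lemma.
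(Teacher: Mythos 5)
There is a genuine gap in your handling of locality, and it is exactly the obstruction the paper's proof is designed to circumvent. Your encoding $\mathcal{E}:\ket{\phi}\mapsto\ket{\eta^{(0,\phi)}}$ is \emph{not} a local encoding in the sense of \cref{def:local-encoding}, and this cannot be repaired by observing that the gates in $U_x$ are $\BigO(1)$-local or that the Schrieffer--Wolff rotation is close to the identity. The history state $\ket{\eta^{(0,\phi)}}=\frac{1}{\sqrt{T+1}}\sum_t\ket{\gamma_t^{(0,\phi)}}$ records the state of the computation at every time step, so a single-qudit operator $A_j$ acting on the witness gets conjugated by a different prefix of the circuit on each branch of the superposition; the resulting operator on the simulator Hilbert space acts nontrivially on essentially all the circuit's light cone and is genuinely nonlocal. \Cref{def:local-encoding} is a structural requirement, not an approximate one, so there is no ``$\eta$-tolerance'' in which to hide this; the $\eta$ in \cref{first_faithfulness_condition} bounds the distance between two subspaces, not a failure of locality of the map $\mathcal{E}$. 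Also, the $V$ in \cref{lem-spec-hmk} is the Schrieffer--Wolff unitary relating $\mathcal{C}_0$ and $\mathcal{S}_0$; it plays no role in whether $\mathcal{E}$ is local, and conflating it with the encoding isometry muddies the argument.

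The paper's route around this is the ``idling to enhance coherence'' trick: pad the verification circuit so it first idles for $L$ time steps, and define the encoding via $V=\sum_i\ket{\eta_\mathrm{idling}^{(0,i)}}\bra{i}$, which maps computational basis states to the uniform superposition over the \emph{idling} clock times only. Because the witness is untouched during idling, $\ket{\eta_\mathrm{idling}^{(0,\alpha)}} = \ket{\alpha}\otimes\ket{0}^{|A|+1}\otimes\frac{1}{\sqrt{L}}\sum_{t=1}^L\ket{1^t0^{T-t}}^c$ is a tensor product of the witness with a fixed ancilla-plus-clock state, so single-qudit observables on the witness map to single-qudit observables on the simulator and $\mathcal{E}$ is manifestly local. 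The price is that $\mathcal{E}(\mathcal{L})$ is no longer equal to $\mathcal{C}_0$ but only $\sqrt{2(1-\sqrt{L/T})}$-close to it, which can be driven to zero by taking $L$ close to $T$. Your proposal keeps the right ingredients --- choosing $\kappa$ small, invoking \cref{lem-spec-hmk} for both conditions, triangle inequality --- but the missing idling construction is what actually makes condition~(\labelcref{first_faithfulness_condition}) verifiable with a legitimately local $\mathcal{E}$, and without it the proof does not go through.
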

\begin{proof}
For any verification circuit, $U$, of any problem in \QMA, we can require that the computation `idles' in its initial state for $L$ time steps before carrying out its verification computation (``idling to enhance coherence''~\cite{universality_of_adiabatic_computation}).


The history state of the computation for the first $L$ time steps will be given by
\begin{equation}
\ket{\eta_\mathrm{idling}^{(0,\alpha)}} = \ket{\alpha} \otimes \ket{0}^{|A|+1} \otimes \frac{1}{\sqrt{L}}\sum_{t=1}^L\ket{1^t0^{T-t}}^c.
\end{equation}
The rest of the history state is captured in
\begin{equation}
\ket{\eta_\mathrm{comp}^{(0,\alpha)}} = \frac{1}{\sqrt{T-L}}\sum_{t=L+1}^T\ket{\gamma_t^{(0,\alpha)}}
\end{equation}
So the total history state is given by
\begin{equation}
\ket{\eta^{(0,\alpha)}} = \sqrt{\frac{L}{T}}\ket{\eta_\mathrm{idling}^{(0,\alpha)}}  + \sqrt{\frac{L-T}{T}}\ket{\eta_\mathrm{comp}^{(0,\alpha)}}
\end{equation}

The encoding $\mathcal{E}(M) = VMV^\dagger$ defined via the isometry
\begin{equation}
V = \sum_i \ket{\eta_\mathrm{idling}^{(0,i)}}\bra{i},
\end{equation}
where the $\ket{i}$ are computational basis states, is local. (This can be verified by direct calculation, see~\cite{kohler2020translationallyinvariant}.)


Moreover, we have that
\begin{equation}
\label{eq:V-V'}
\begin{split}
\|\Pi_{\mathcal{C}_0}- \Pi_{\mathcal{E}(\mathcal{L})}\|^2 & = \left\| \sum_{\ket{\phi} \in \mathcal{L}} \left(\ket{\eta^{(0,\phi)}}\bra{\eta^{0,\phi}} - \ket{\eta_{\mathrm{idling}}^{{(0,\phi)}}}\bra{\eta_{\mathrm{idling}}^{{(0,\phi)}}}\right) \right\| ^2\\
 & = \left\| \sum_{\ket{\phi} \in \mathcal{L}} \left(\sqrt{\frac{T -L}{T}}\ket{\eta_{\mathrm{comp}}^{{(0,\phi)}}}\bra{\eta_{\mathrm{comp}}^{{(0,\phi)}}} + \left(\sqrt{\frac{L}{T}}-1\right)\ket{\eta_{\mathrm{idling}}^{{(0,\phi)}}}\bra{\eta_{\mathrm{idling}}^{{(0,\phi)}}} \right) \right\|^2 \\
 &  \leq 2\left(1-\sqrt{\frac{L}{T}}\right)
\end{split}
\end{equation}
where $\mathcal{L} \defas \{\ket{\phi}:Q\ket{\phi} = \lambda\ket{\phi}, \lambda > c \}$
Therefore $\|\Pi_{\mathcal{C}_0}- \Pi_{\mathcal{E}(\mathcal{L})}\|$ can be made arbitrarily small by increasing $L$.

The result follows immediately from \cref{lem-spec-hmk} and the triangle inequality.
\end{proof}

\section{General Conditions for Universality} \label{general-condition}


In order to state our main theorem we require one more definition.

\begin{definition}[Closed Hamiltonian model]
We say that a model $\mathcal{M}$, is closed if for any pair of Hamiltonians $H_A^{(1)}, H_B^{(2)} \in \mathcal{M}$ acting on sets of qudits $A,B$ respectively where in general $A \cap B \neq \{\}$, there exists a Hamiltonian $H^{(3)} \in \mathcal{M}$ which can efficiently simulate $H_A^{(1)} + H_B^{(2)}$.
\end{definition}

We can now prove our main result \cref{th:classification_intro}:

\classification*

\begin{proof}
First consider the only if direction.
Closure is clearly necessary: if a model $\mathcal{M}$ is universal, all Hamiltonians (including those of the form $H_A^{(1)} + H_B^{(2)}$ for $H_A^{(1)}, H_B^{(2)} \in \mathcal{M}$) can be simulated by a Hamiltonian in the model.
In \cref{lem:kh-faithful} we proved the \kham problem is \QMA-complete under faithful reductions.
Any efficient universal model must be able to simulate Hamiltonians in $\mathcal{K}$ with only polynomial overhead, hence \mham must itself be \QMA-complete under faithful reductions.

Now consider the if direction.
Let $\mathcal{M}$ be a family of Hamiltonians meeting the conditions of the theorem, i.e.\ such that \mham is \QMA-complete under faithful reductions, and $\mathcal{M}$ is closed.
We will explicitly construct a universal model, solely based on these conditions.

Consider the following problem:
\begin{problem}[\yesHam] \label{yesHam}
\probleminput{
A $k$-local Hamiltonian $\Htarget$ acting on $n$ spins with local dimension $d$.
}
\problemquestion{Output
\YES\
}
\end{problem}

This problem is (clearly) trivial.
But we can choose to construct a non-trivial \QMA verification circuit for it.
We will choose a verification circuit which picks out a particular subspace that allows us to prove universality.
By \cref{def:qma-under-faithful} there must be a faithful reduction with respect to \emph{this} verification circuit from \yesHam to \mham.

The verification circuit we choose, and the subspace it picks out, are captured in the following.

\begin{lemma} \label{lem:yesHam}
\yesHam can be verified by a circuit $U_a$ with gapped acceptance operator $Q(U_a)$ with ground space
\begin{equation}
\mathcal{L}_0 \defas \mathrm{span}\{\ket{\phi}: Q(U_a)\ket{\phi} = \ket{\phi} \}
\end{equation}
satisfying
\begin{equation}
\|\Pi_{\mathcal{L}_0} - \Pi_{\mathcal{W}} \| \leq \BigO\left(a^{-1}  \right)
\end{equation}
where
\begin{equation}
\mathcal{W} \defas \mathrm{span}\left\{\ket{w_\mu} = \frac{1}{\sqrt{a^2+1}} \ket{\psi_\mu}\left(a\ket{\#} + \ket{E_\mu}\right): \Htarget \ket{\psi_\mu} = E_\mu\ket{\psi_\mu} \right\}.
\end{equation}
\end{lemma}

\begin{proof}
The verifier circuit, $C_V$, acts on the witness and two ancilla registers, $B$, $B'$.
It will be helpful to divide the witness into two separate registers: An $A$ register, which is $n$ $d$-dimensional qudits.
And an $A'$ register, which consists of $m$ qutrits with orthonormal basis states $\ket{\#}$, $\ket{0}$ and $\ket{1}$, where $m = \log_2(\epsilon)$.
The $B$ register is the same size as the $A'$ register.
The $B'$ register consists of a single qubit.

The verifier $C_V$ operates as follows:
\begin{enumerate}
\item Apply a unitary rotation $P_a: \ket{0} \rightarrow \frac{1}{\sqrt{a^2+1}} (a\ket{\#} + \ket{1})$ to the $B'$ register.f
\item Carry out controlled-phase-estimation on the $A$ register with respect to the unitary generated by $\Htarget$, $U = e^{i\Htarget\tau}$, for some $\tau$ such that $\|\Htarget \tau \|<2\pi$. The $B'$ register serves as the control qubit. Calculate (an approximation to) the energy $E_\mu$ from the eigenphase $\theta_\mu$ and store the result in the $B$ register (in binary).

Calculating $E_\mu$ to accuracy $\epsilon$ requires calculating the eigenphase $\theta_\mu$ to accuracy $\BigO(\epsilon/\|\Htarget\|)$ which takes $\BigO(\|\Htarget\|/\epsilon)$ uses of $U = e^{i\Htarget\tau}$. The unitary $U$ must therefore be implemented to accuracy $O(\epsilon/\|\Htarget\|)$, which can be done with overhead $\poly(n,d^k, \|\Htarget\|,\tau,1/\epsilon)$ where $n$ is system size, $d$ is local dimension and $k$ is locality via~\cite[Lemma 3.3]{kohler2020translationallyinvariant}. The whole procedure takes time $\TPE = \poly(n,d^k,\|\Htarget\|/\epsilon)$
\item Carry out a SWAP test \cite{2001} between registers $A'$ and $B$. Accept if outcome $0$ is measured, reject otherwise.
\end{enumerate}
The entire procedure takes time $T = \BigO(\poly(n,d^k, \|\Htarget \|)/\epsilon)$.

Let
\begin{equation}
\ket{\alpha_\mu} = \sum_j \left(\frac{1}{2^m} \sum_{k=0}^{2^m-1} e^{2\pi i k (E_\mu - j/2^m)}\right) \ket{j}
\end{equation}
be the result of applying the phase estimation algorithm on $\ket{\psi_\mu}$ with respect to $U = e^{i\Htarget\tau}$.
Then evidently
\begin{equation}
\ket{\phi_\mu} = \frac{1}{\sqrt{a^2+1}} \ket{\psi_\mu}_A(a\ket{\#}_{A'} + \ket{\alpha_\mu}_{A'})
\end{equation}
is an eigenvector of $Q(U_a)$ with eigenvalue 1, and all eigenvectors of $Q(U_a)$ with eigenvalue 1 are in $\mathrm{span}\{ \ket{\phi_{\mu}} \}$.

Moreover,
\begin{equation}
\bra{w_\mu}\ket{\phi_\mu} \leq \frac{a^2 + \frac{4}{\pi^2}}{a^2+1}
\end{equation}

Therefore
\begin{equation}
\|\ketbra{w_\mu} - \ketbra{\phi_\mu} \| \leq 2 \sqrt{1 - \left( \frac{a^2 + \frac{4}{\pi^2}}{a^2+1}\right)^2}
\end{equation}
and
\begin{equation} \label{eq:subspaces-a}
\|\Pi_{\mathcal{L}_0} - \Pi_{\mathcal{W}} \| \leq \BigO\left(a^{-1} \right).
\end{equation}

The next largest eigenvalue of $Q(U_a)$ is $\frac{1}{2}$.
\end{proof}

It follows immediately from \cref{eq:subspaces-a} that
\begin{equation}
\|\mathcal{E}(\Pi_{\mathcal{L}_0}) - \mathcal{E}(\Pi_{\mathcal{W}}) \| \leq \BigO\left(a^{-1} \right)
\end{equation}
for any encoding $\mathcal{E}$.

It follows from the triangle inequality and \cref{faithfulness_definition}(\labelcref{first_faithfulness_condition}) that for any instance of \yesHam there exists $\HLS \in \mathcal{M}$ with low energy subspace $\mathcal{S}_0 \defas \mathrm{span}\{\ket{\phi} : H_x\ket{\phi} = \lmin \ket{\phi} \}$ such that
\begin{equation} \label{eq:subspaces}
\|\Pi_{\mathcal{S}_0} - \Pi_{\mathcal{E}\left(\mathcal{W}\right)}\| \leq \eta + \BigO\left(a^{-1}  \right)
\end{equation}
where $\mathcal{E}=V\left(M \otimes P + \overline{M}\otimes Q \right)V^\dagger$ is some local encoding and $\eta$ can be chosen to be arbitrarily small.
The spectral gap above $\mathcal{S}_0$ is $\Omega(1/\poly(T))$.

Another trivial problem (that is therefore also evidently in \QMA) is:
\begin{problem}[\flagid]
\probleminput{
Classical description of a one-qudit state $\ket{f}$}

\problemquestion{
Output \YES.
}
\end{problem}

For this problem we will use a faithful reduction with respect to the non-trivial verification circuit which simply measures a single qudit in the $f$ basis.
So, for any single qudit state $\ket{f}$, there exists $H_f \in \mathcal{M}$ such that:
\begin{itemize}
\item $H_{f} \ket{\mathcal{E}_{\mathrm{state}}(\phi)} = \lambda^{(f)}_0  \ket{\mathcal{E}_{\mathrm{state}}(\phi)}$ for all $\ket{\phi}$ such that $\bra{\phi}\ket{f} = 0$
\item $H_{f} \ket{\mathcal{E}_{\mathrm{state}}(f)} = \lambda^{(f)}_1 \ket{\mathcal{E}_{\mathrm{state}}(f)}$
\end{itemize}
for some local encoding ${\mathcal{E}_{\mathrm{state}}}$, where $\lambda^{(f)}_k$  can be efficiently computed. (Since the problem size is $O(1)$ for a state $\ket{f}$ that can be described in $O(1)$ bits.) Wlog we will take $\lambda^{(f)}_k = k$.

Consider a Hamiltonian acting on $N$ spins:
\begin{equation}
\Hsim = \Delta \left( \HLS - \lmin \identity \right) + a \sum_{i=n'+1}^N 2^{i-(n'+1)} H_i^{(1)}
\end{equation}

Where $\HLS \in \mathcal{M}$ is a faithful reduction (with respect to the verifier defined in \cref{lem:yesHam}) from \yesHam for the Hamiltonian  $\Htarget = \sum_\mu E_\mu \ketbra{\psi_\mu}$.
$E_\mu$ in $\ket{w_\mu}$ is expressed in binary to precision $\epsilon$ in qudits $[n'+1,N]$.\footnote{Here $n'$ is the number of spins in the encoded witness state $\mathcal{E}\left( \ket{w_\mu}\right)$. Since this is a reduction to \QMA we have $n' = \BigO(\poly(n))$.}
The $H_i^{(1)} \in \mathcal{M}$ are faithful reductions (with respect to the obvious verification circuit) from \flagid for the flag states $\ket{1}$ acting on the $i^{\mathrm{th}}$ qudit.
We will require $\Delta > \|\Htarget\|$.

First we show that $\Hsim$ can simulate $H' = \sum_\mu E_\mu \ketbra{w_\mu}$.
The low energy subspace of $\HLS$ consists of states in the subspace $\mathcal{S}_0$.
On states in $\mathcal{S}_0$, $\HLS - \lmin$ has energy zero.
While on states in $\mathcal{E}(\mathcal{W})$, $a \sum_{i=n'+1}^N 2^{i-(n'+1)} H_i^{(1)}$ has energy in the range $[E_\mu - \epsilon, E_\mu+\epsilon]$.

It follows from~\cite[Lemma 24]{Cubitt:2017} and \cref{eq:subspaces} that there exists an encoding $\mathcal{E}'(M) = V'\left(M \otimes P + \overline{M}\otimes Q \right)V'^\dagger$ such that
\begin{equation}\label{eq:isom-diff}
\|V'-V\| \leq \sqrt{2}(\eta + O(a^{-1}))
\end{equation}
and $\mathcal{E}'(\identity) = \Pi_{\mathcal{S}_0}$.
Moreover,
\begin{equation}
V' = WV
\end{equation}
where $W$ is a unitary satisfying
\begin{equation}
\Pi_{\mathcal{S}_0} = W\Pi_{\mathcal{E}(\mathcal{W})}W^\dagger
\end{equation}
and
\begin{equation}
\| W - \identity \| \leq \sqrt{2} \|\Pi_{\mathcal{S}_0} -\Pi_{\mathcal{E}(\mathcal{W})}  \| \leq \BigO(\eta + a^{-1})
\end{equation}

We need the following technical lemma.

\begin{lemma}[First-order simulation~\cite{BH17}[Lemma 14]] \label{technical-lemma}
	Let $H_0$ and $H_1$ be Hamiltonians acting on the same space and $\Pi$ be the projector onto the ground space of $H_0$. Suppose that $H_0$ is zero on $\Pi$ and the next smallest eigenvalue is at least 1.
	Let $U$ be an isometry such that $UU^{\dagger}=\Pi$ and
	\begin{equation}
	\label{eq:firstorderrequirement}
	\|U \Htarget U^\dag - \Pi H_1 \Pi\| \le \epsilon/2.
	\end{equation}
	Let $H_{\operatorname{sim}} = \Delta H_0 + H_1$ .
	Then there exists an isometry $\tilde{V}$ onto the the space spanned by the eigenvectors of $H_{\operatorname{sim}}$ with eigenvalue less than $\Delta/2$ such that
	\begin{enumerate}
		\item $\|U-\tilde{V}\| \le \BigO\left( \Delta^{-1}\| H_1\|  \right)
$
		\item $\|\tilde{V}H_{\operatorname{target}} \tilde{V}^{\dagger} -H_{\operatorname{sim}< \Delta/2} \| \le \BigO\left( \Delta^{-1}\| H_1\|^{2}\right) + \epsilon/2 $
	\end{enumerate}
\end{lemma}
We will apply \cref{technical-lemma} with $H_1=a \sum_{i=n'+1}^N 2^{i-(n'+1)} H_i^{(1)}$ and $H_0= \delta \HLS$ where $\delta = \BigO(\poly(T))$.
We have that in $\Pi_{\mathcal{S}_0}$, $\HLS$ has energy zero and by \cref{faithfulness_definition}(\labelcref{second_faithfulness_condition}) the spectral gap above $\mathcal{S}_0$ scales as $\Omega(1/\poly(T))$ so $H_0=\delta\HLS$ has next smallest eigenvalue at least 1.

Moreover, $\left\|H_1\right\| =a\|\Htarget \|$.
Note that $V'$ is an isometry which maps onto the ground state of $H_0$, $\mathcal{S}_0$.
By construction we have that the spectrum of $\Htarget$ is approximated to within $\epsilon $ by $H_1$ restricted to $\mathcal{E}(\mathcal{W})$, so $\|\Pi_{\mathcal{E}(\mathcal{W})} H_1 \Pi_{\mathcal{E}(\mathcal{W})}  - {\mathcal{E}}(\Htarget)\| \leq \epsilon$.

Using that the operator norm is unitarily invariant, and that $V' = WV$ gives
\begin{equation}
\|W \Pi_{\mathcal{E}(\mathcal{W})} H_1 \Pi_{\mathcal{E}(\mathcal{W})}W^\dagger  - {\mathcal{E'}}(\Htarget)\| \leq \epsilon.
\end{equation}

We also have
\begin{equation}
\begin{split}
\|\Pi_{\mathcal{S}_0} H_1 \Pi_{\mathcal{S}_0} - W \Pi_{\mathcal{E}(\mathcal{W})} H_1 \Pi_{\mathcal{E}(\mathcal{W})}W^\dagger\| & =  \|\Pi_{\mathcal{S}_0} H_1 \Pi_{\mathcal{S}_0} - \Pi_{\mathcal{S}_0} WH_1W^\dagger \Pi_{\mathcal{S}_0}\| \\
& \leq  \|H_1 - WH_1 W^\dagger \| \\
& \leq 2 \|H_1\| \|\identity-W\| \\
& \leq \BigO\left(a \eta \|\Htarget \| \right)
\end{split}
\end{equation}
where we have used~\cite[Lemma 18]{Cubitt:2017} in the penultimate step.
So
\begin{equation}
\|\Pi_{\mathcal{S}_0} H_1 \Pi_{\mathcal{S}_0} - \mathcal{E}'(\Htarget) \| \leq \epsilon + \BigO\left(a \eta \|\Htarget \| \right).
\end{equation}

\cref{technical-lemma} therefore implies that there exists an isometry $\tilde{V}$ that maps exactly onto the low energy space of $\Hsim$ such that $\|\tilde{V}-V'\|\le \BigO\left(\|\Htarget \|a/(\Delta/\delta)\right )=\BigO\left(a \delta \|\Htarget \|/\Delta \right)$.
By the triangle inequality and \cref{eq:isom-diff}, we have
\begin{equation}\label{eq:T-scaling}
\|V-\tilde{V}\|\le \|V-V'\|+\|V'-\tilde{V}\|\le \BigO \left(\frac{a\poly(T')\|\Htarget \|}{\Delta} + \eta +a^{-1}  \right).
\end{equation}

The second part of the lemma implies that
\begin{align}
  \|\tilde{V} H' \tilde{V}^{\dagger} -(\Hsim)_{<\Delta'/2}\|
  &\le \epsilon + \BigO\left(a \eta \|\Htarget \| + (a|\Htarget \|)^2/(\Delta/\delta) \right)\\
  &=  \epsilon  +\BigO\left(a \eta \|\Htarget \|+ \frac{a^2|\Htarget \|^2 \delta}{\Delta}   \right).
\end{align}
Therefore, the conditions of \cref{app-sim} are satisfied for a $(\Delta',\eta',\epsilon')$-simulation of $H'$, with $\eta' = \BigO \left(\frac{a \poly(T')\|\Htarget \|}{\Delta} + \eta + a^{-1} \right)$, $\epsilon' = \epsilon   +\BigO\left(a \eta \|\Htarget \|+ \frac{a^2\poly(T')\|\Htarget \|^2}{\Delta} \right)$ and $\Delta'= \Delta/\delta = \Delta / \poly(T)$.

By definition we can choose $\eta$ to be arbitrarily small.
We can also make $\BigO\left(a^{-1} \right)$ arbitrarily small.
By increasing $T$, we can also make $\epsilon$ arbitrarily small.
Therefore, by choosing $\Delta$ such that
\begin{equation}
\Delta \geq  \Delta'\poly(T') + \frac{a \poly(T')\|\Htarget \|}{\eta'} + \frac{a^2\poly(T')\|\Htarget \|^2}{ \epsilon' }
\end{equation}
we can construct $\Hsim$ which is a $(\Delta', \eta', \epsilon')$-simulation of $H'$ with arbitrarily small $\epsilon'$, $\eta'$.
Since $\Hsim$ is a sum of Hamiltonians which are all in $\mathcal{M}$, by the closure property there exists $\Huniv \in \mathcal{M}$ which can efficiently simulate $\Hsim$.
Therefore, since simulations compose \cite[Lemma~17]{Cubitt:2017} $\Huniv$ can simulate $H'$.

Finally, we show that $H' = \sum_\mu E_\mu \ketbra{w_\mu}$ is itself a simulation of $\Htarget$. Consider the local encoding
\begin{equation}
\mathcal{E'}(M) = W M W^\dagger,
\end{equation}
where $W = \sum_\mu \ket{\psi_\mu}\ket{0} \bra{\psi_\mu}$, and the non local encoding
\begin{equation}
\tilde{\mathcal{E}'}(M) = \tilde{W} M \tilde{W}^\dagger
\end{equation}
withk
\begin{equation}
  \tilde{W} = \frac{1}{\sqrt{a^2+1}}  \sum_\mu \ket{\psi_\mu} \left(a \ket{\#}+ \ket{E_\mu} \right) \bra{\psi_\mu}.
\end{equation}
We have that
\begin{equation}
\|W - \tilde{W}\| = 2\left(1 - \frac{a}{\sqrt{a^2+1}} \right)
\end{equation}
So by increasing $a$ we can make the norm arbitrarily small. We also have that $S_{\tilde{\mathcal{E}}'} = S_{H'}$, so condition~(\labelcref{app-sim:1}) from \cref{app-sim} is met. The spectrum of $H'$ is exactly the spectrum of $\tilde{\mathcal{E}}'(\Htarget)$, so condition~(\labelcref{app-sim:2}) of \cref{app-sim} is also met. Therefore $H'$ is a simulation of $\Htarget$.

Using the composition of simulations again, we have that $\Huniv$ can simulate $\Htarget$.
We have left $\Htarget$ arbitrary, so $\mathcal{M}$ is a universal model.

Finally we consider efficiency.
The simulation of $\Htarget$ by $H'$ is clearly efficient.
To see that the simulation of $H'$ by $\Hsim$ is efficient, note that the number of qudits in the simulation, $N$, must be polynomial in $n$ and $\|\Htarget\|$ as $\HLS$ is in \QMA.
Furthermore, $\|\Hsim\| = \Omega(\Delta) = \poly(T',\|\Htarget\|,1/\epsilon',1/\eta') = \poly(n,\|\Htarget\|,1/\epsilon',1/\eta')$.
Thus $\Hsim$ is an efficient simulation of $H'$.
\end{proof}

There are two corollaries about universal Hamiltonians which aren't efficient in the sense of~\cite{Cubitt:2017}, but which are nonetheless interesting universal models which are better suited to some applications.

\begin{definition}
We say a Hamiltonian can be described succinctly if it can be described by $\BigO(\log(n))$ bits of information when acting on $n$ qudits.
\end{definition}

\begin{corollary} \label{co:succcinctClassification}
Let $\mathcal{M}$ be a family of succinct Hamiltonians. Then $\mathcal{M}$ is universal and can efficiently simulate any succinct Hamiltonian iff \mham is \QMAEXP-complete under faithful reductions and $\mathcal{M}$ is closed.
\end{corollary}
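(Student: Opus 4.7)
The plan is to adapt the proof of \cref{th:classification} step for step, replacing \QMA by \QMAEXP throughout and tracking that every Hamiltonian produced along the way admits an $\BigO(\log n)$-bit description in its acting system size $n$. For the only-if direction, closure remains immediately necessary by the same argument as before. To obtain \QMAEXP-completeness of \mham under faithful reductions, I would build a \emph{succinct} variant of the modified Kitaev Hamiltonian $\Hmk$ of \cref{eq:hmk} that encodes a uniform \QMAEXP-verification circuit of length $T = 2^{\poly(|x|)}$. Since \QMAEXP verifiers are generated from $x$ in time $\poly(|x|)$, the resulting $\Hmk$ has a $\poly(|x|) = \poly(\log T)$-bit description and is therefore succinct; the spectral analysis of \cref{lem-spec-hmk} and the idling argument of \cref{lem:kh-faithful} carry through verbatim with $T$ now exponential in $|x|$. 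Any efficient universal simulator for succinct Hamiltonians must in particular simulate this family with only polynomial overhead in the succinct description, so \mham itself must be \QMAEXP-complete under faithful reductions.

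For the if direction, I would replace \yesHam and \flagid with succinct variants whose input is a classical description of $\Htarget$ or of a one-qudit flag $\ket{f}$ in $\BigO(\log n)$ bits. The verifier of \cref{lem:yesHam} runs in time $\TPE = \poly(n, d^k, \|\Htarget\|/\epsilon)$, which is polynomial in the system size and hence \emph{exponential} in the input description length, placing the problem in \QMAEXP. Faithfulness then supplies $\HLS \in \mathcal{M}$ and flag Hamiltonians $H_i^{(1)} \in \mathcal{M}$ with low-energy subspaces satisfying the analog of \cref{eq:subspaces}, each admitting a succinct description because it depends only on the succinct input plus the index $i$ and weight $2^{i-(n'+1)}$, both encodable in $\BigO(\log n)$ bits. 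Assembling $\Hsim$ exactly as in the proof of \cref{th:classification} and invoking \cref{technical-lemma} yields a $(\Delta',\eta',\epsilon')$-simulation of $H' = \sum_\mu E_\mu \ketbra{w_\mu}$, which in turn simulates $\Htarget$ via the encoding $\tilde W$ from that proof. A single application of closure then produces $\Huniv \in \mathcal{M}$ which simulates $\Hsim$, and composing the simulations gives an efficient simulation of $\Htarget$ by $\Huniv$, where ``efficient'' is measured in $n$, i.e.\ in the \emph{exponential} of the succinct input length — precisely the notion relevant for succinct universal families.

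The main obstacle is not any new analytic estimate but the bookkeeping required to preserve succinctness at every step. In particular, the sum $a \sum_{i=n'+1}^N 2^{i-(n'+1)} H_i^{(1)}$ contains $\BigO(N)=\BigO(\poly(n))$ terms and so could a priori require $\poly(n)$ bits to write down; the point to verify is that both the binary weights and the identity of each $H_i^{(1)}$ are uniformly computable in $\poly(|x|)$ time from the $\BigO(\log n)$-bit descriptions of $\Htarget$ and of the reductions supplied by the faithfulness hypothesis, so the whole sum admits an $\BigO(\log n)$-bit succinct description. Closure is then invoked once, on succinctly described inputs, to absorb the sum into a single $\Huniv \in \mathcal{M}$. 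Once this uniform-computability check is in place, the $T$-scaling of \cref{eq:T-scaling} and the choice of $\Delta$ are unchanged from the proof of \cref{th:classification}, and the corollary follows.
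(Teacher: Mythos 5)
Your proposal establishes only one of the two conjuncts on the left-hand side of the iff. The corollary asserts both that $\mathcal{M}$ \emph{can efficiently simulate any succinct Hamiltonian} and that $\mathcal{M}$ \emph{is universal} (in the sense of Definition~3.4, i.e.\ able to simulate \emph{any} finite-dimensional Hamiltonian, not merely the succinct ones). Your argument runs \succinctYesHam through the machinery of the proof of \cref{th:classification} and correctly produces an efficient simulation of any \emph{succinct} $\Htarget$, but it never touches a non-succinct target. You cannot feed an arbitrary $\Htarget$ into \succinctYesHam -- its input is required to be an $\BigO(\log n)$-bit description, which a generic $\Htarget$ simply does not have. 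The paper closes this hole with an extra step that is absent from your proposal: it invokes the succinct universal Hamiltonian $\Htiuniv$ from \cite[Theorem 3.6]{kohler2020translationallyinvariant} (which simulates arbitrary Hamiltonians at exponential overhead), observes that $\Htiuniv$ itself is succinct and hence simulable by $\mathcal{M}$ via the argument you do give, and then composes the two simulations to conclude universality for arbitrary targets, inheriting the exponential overhead. Without this step, ``the corollary follows'' is not justified.

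A secondary, smaller point on the only-if direction: you propose to build a ``succinct variant of $\Hmk$'' encoding a length-$T=2^{\poly(|x|)}$ uniform verifier and assert succinctness because the verifier is uniformly generated. Be careful here: the standard unary-clock Kitaev construction acts on $\Theta(T)$ qubits and has a distinct propagation term $H_t$ at each of $T$ clock positions, so ``description'' must be read as ``a $\poly(\log n)$-size algorithm that outputs the $j$-th local term on demand''. The paper avoids this interpretive wrinkle by instead invoking the translationally invariant Gottesman--Irani Hamiltonian (already known to be \QMAEXP-complete) and observing that the modification-and-idling argument of \cref{lem:kh-faithful} transfers. Your route is workable if you spell out the on-demand notion of succinctness, but as written it elides the reason the naive $\Hmk$ is not obviously an $\BigO(\log n)$-bit object.
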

\begin{proof}
This proof relies on the same ideas as \cref{th:classification_intro}, so here we sketch the main ideas, highlighting where the proofs differ.
First, note the proof that \kham is \QMA-complete under faithful reductions (\cref{lem:kh-faithful}) can be repurposed to prove that the \mham which was shown to be \QMAEXP-complete in~\cite{gottesman2009} remains \QMAEXP complete under faithful reductions.
The only if direction follows immediately as in \cref{th:classification_intro} (where now the overhead can be exponential).

To see the if direction, we will assume there exists a family of Hamiltonians, $\mathcal{M}$, such that \mham is \QMAEXP-complete under faithful reductions and $\mathcal{M}$ is closed.
Consider the following computational problem:

\begin{problem}[\succinctYesHam] \label{succinctYesHam}
\probleminput{
A $k$-local Hamiltonian $\Htarget$ acting on $n$ spins with local dimension $d$, which can be described succinctly.}
\problemquestion{Output
\YES\
}
\end{problem}

\noindent \succinctYesHam is clearly a trivial problem.
But, as with \yesHam, we can construct a non-trivial verification circuit, which picks out a particular subspace that allows us to prove universality.
By \cref{def:qma-under-faithful} there must be a faithful reduction with respect to \emph{this} verification circuit from \succinctYesHam to \mham.

The verification circuit we choose, and the subspace it picks out, are as in \cref{lem:yesHam}.
Although the circuit is unchanged, this is now a \QMAEXP-verification circuit, since for Hamiltonians acting on $n$ qudits the circuit length and witness size are of order $\poly(n) = \BigO(2^{\poly(x)})$, where $x$ is the number of bits of information needed to describe the input to the problem.

It then follows, using the same argument as in \cref{lem:yesHam}, that for any Hamiltonian $\Htarget$ acting on $n$ spins which can be described succinctly, there exists a Hamiltonian in $\mathcal{M}$ which can simulate $\Htarget$ efficiently (where efficiency is defined in terms of number of qudits, not bits of information).

To prove universality, note that in~\cite[Theorem 3.6]{kohler2020translationallyinvariant} a construction is given of a universal Hamiltonian, $\Htiuniv$, (with exponential overhead in terms of number of spins and norm of simulating system) which can be described succinctly.

Since there exists a Hamiltonian in $\mathcal{M}$ which can simulate $\Htiuniv$ (for any values of the parameters in $\Htiuniv$), and since simulations compose, it follows that $\mathcal{M}$ is a universal model.
When simulating general (non-succinct) Hamiltonians, the universal model, $\mathcal{M}$, inherits an exponential overhead in terms of the numbers of qudits and the norm of the simulating system from $\Htiuniv$.
\end{proof}

\QMAEXP is a more powerful complexity class than \QMA so it may seem odd that it appears to be less efficient as a simulator.
However, there are some situations where using a family of Hamiltonians meeting the conditions of \cref{co:succcinctClassification} will give a more efficient simulator than using a family of Hamiltonians meeting the conditions of \cref{th:classification_intro}.
To see this, note that given a Hamiltonian which can be described succinctly, it can be simulated efficiently (in the sense of~\cite{Cubitt:2017} i.e.\ in terms of numbers of qudits, simulating system norm, and $\epsilon$ and $\eta$ parameters) by either a family of Hamiltonians with a \QMA-complete \mham, or a family of Hamiltonians with a \QMAEXP-complete \mham.
However, the simulation by the \QMA-complete family of Hamiltonians will not be efficient in terms of the number of bits needed to describe the simulating Hamiltonian.
Whereas the simulation by the \QMAEXP-complete family of Hamiltonians would be.
So there are situations where simulation using a \QMAEXP-complete family of Hamiltonians is more efficient, demonstrating that the question of which family of Hamiltonians is a `more powerful' simulator doesn't have a straightforward answer.

The obvious example of Hamiltonians that can be succinctly described are translationally invariant Hamiltonians.
Examples of translationally invariant universal Hamiltonians are constructed in~\cite{PiddockBausch,kohler2020translationallyinvariant}, where it is noted that  a translationally invariant universal model with fixed interactions must have an exponential overhead in terms of number of spins by a simple counting argument.

By considering the problem \precmham and introducing the idea of a \emph{exponentially faithful}, reduction we can also derive conditions for universal models which are efficient in terms of the number of qudits in the simulator system, but not in terms of the simulating system's norm.
We say a reduction is \emph{exponentially faithful} if it meets the conditions of \cref{faithfulness_definition}, but where now the gap in the spectrum of the acceptance operator, $g$, and the corresponding gap in the spectrum of the Hamiltonian is required to satisfy $g > 1/\exp(n)$ for $n$ the size of the input.
This is a natural relaxation when considering \precmham, as it requires the gap in the spectrum of the acceptance operator in \YES cases to scale in the same way as the promise gap for the problem.

The natural complexity class when considering \precmham is \PreciseQMA.
It is known that every problem in \PreciseQMA can be solved by a quantum circuit of length $T = \BigO(\exp(n))$ acting on $\poly(n)$ qudits, with completeness $c = 1 - 2^{\poly(n)} $ and soundness $s = 2^{\poly(n)} $ (where $n$ is the size of the problem input) \cite[Corollary~10]{Fefferman2016}.
Therefore, when defining what it means for \precmham to be \PreciseQMA-complete under exponentially faithful reductions there are two classes of circuits we could require faithfulness with respect to - the polynomial sized circuits that give exponentially small completeness-soundness gap, or the exponential sized circuits that give completeness-soundness gap exponentially close to 1.
Here we choose the latter, and define:

\begin{definition} \label{def:precqma-under-faithful}
We say that \precmham is \PreciseQMA-complete under exponentially faithful reductions if for all $A \in $ \PreciseQMA and for any exponential time verification circuit $U$ which verifies $A$, there exists a reduction from $A$ to the \precmham problem which is exponentially faithful with respect to $U$.
\end{definition}

\begin{corollary}\label{cr:precise}
A family of Hamiltonians, $\mathcal{M}$, is a universal model which is
\begin{enumerate}[1.]
\item efficient in terms of the numbers of qudits,
\item  not efficient in terms of the simulating system's norm and
\item achieves exponential accuracy in the $\epsilon$ parameter with polynomial overhead in number of qudits and exponential overhead in simulating system norm
\end{enumerate}
iff
\begin{enumerate}[i.]
\item \precmham is \PSPACE-complete under exponentially faithful reductions,
\item $\mathcal{M}$ is closed and
\item \mham is not \QMA-complete under faithful reductions.
\end{enumerate}
\end{corollary}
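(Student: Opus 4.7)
The plan is to mirror the proofs of \cref{th:classification} and \cref{co:succcinctClassification}, replacing \QMA by \PreciseQMA (which equals \PSPACE), polynomial scales by exponential ones, and faithful reductions by exponentially faithful reductions. The negation condition iii---that \mham is \emph{not} \QMA-complete under faithful reductions---is what pins down the ``precise'' versus ``efficient'' regime, and will be handled by contradiction against \cref{th:classification}.

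For the only if direction, closure (ii) follows from universality exactly as in \cref{th:classification}. For (i), I would adapt the modified Kitaev construction of \cref{lem:kh-faithful} to the \PreciseQMA setting: starting from an exponential-length verifier $U$ for a \PreciseQMA problem (with completeness $1-2^{-\poly(n)}$ and soundness $2^{-\poly(n)}$), build $\Hmk$ with $\kappa$ chosen exponentially small---sub-leading to the $\Omega(1/T^3)$ spectral gap of $H_0$---and re-run the Schrieffer--Wolff analysis of \cref{lem-spec-hmk}. Every inequality carries through at exponential scales, yielding an exponentially faithful reduction with respect to $U$; the resulting Hamiltonian family has a local Hamiltonian problem that is \PSPACE-complete under exponentially faithful reductions, and since $\mathcal{M}$ is universal with polynomial qudit overhead it inherits the completeness via the transfer argument of \cref{th:classification}. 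For (iii), suppose for contradiction that \mham were \QMA-complete under faithful reductions; then \cref{th:classification} would make $\mathcal{M}$ an efficient universal model, in particular with polynomial overhead in the simulator norm, contradicting property 2.

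For the if direction, I would parallel the construction in the proof of \cref{th:classification} with an ``exponential'' analog of \cref{lem:yesHam}. Introduce a trivial \PreciseQMA problem of the same form as \yesHam but verify it by an exponential-length circuit that runs controlled phase estimation on $\ee^{\ii \Htarget \tau}$ to exponential precision $\epsilon=2^{-\poly(n)}$. By condition (i) there is an exponentially faithful reduction of this problem to \precmham, producing $\HLS\in\mathcal{M}$ whose low-energy subspace approximates, up to exponentially small error, the local encoding of the weighted eigenbasis of $\Htarget$. Combining $\HLS$ with \flagid Hamiltonians as in \cref{th:classification} gives $\Hsim$, and closure (ii) delivers $\Huniv\in\mathcal{M}$ simulating it. Tracking the error scalings exactly as around \cref{eq:T-scaling}, but with $T$ now exponential, shows the qudit count remains polynomial (the verifier acts on $\poly(n)$ qudits despite its exponential depth), the norm overhead becomes exponential (inherited from $\Delta$, which must absorb $\poly(T)$), and $\epsilon$ can be driven exponentially small---matching properties 1--3.

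The main obstacle is verifying that the Schrieffer--Wolff analysis of \cref{lem-spec-hmk} remains consistent when $\kappa$, the acceptance-operator gap $g$, and the spectral gap of $H_0$ are all exponentially small simultaneously: one must ensure the ordering $g>2T^3(T+1)\kappa$ and the perturbative error $T^3\kappa^2$ still produce a genuine, exponentially small separation in the spectrum of $\Hmk$. Once $\kappa$ is fixed as a sufficiently negative power of $2^{\poly(n)}$, the remainder of both directions is a careful but routine repetition of the steps in \cref{th:classification} and \cref{lem:kh-faithful}, with every polynomial in $T$ replaced by the corresponding exponential quantity.
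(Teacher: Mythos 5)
Your proposal follows the same route as the paper: both directions mirror \cref{th:classification} and \cref{lem:kh-faithful} with polynomial scales replaced by exponential ones, condition~(iii) is dispatched by contradiction against \cref{th:classification}, and condition~(i) is obtained by instantiating the modified Kitaev family with exponential-length verifiers (equivalently, simulating $\mathcal{K}$ to exponential accuracy) and transferring via universality. You supply somewhat more detail on re-running the Schrieffer--Wolff analysis with exponentially small $\kappa$, but the structure and key steps coincide with the paper's proof.
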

\begin{proof}
Recall that \PSPACE = \PreciseQMA~\cite{Fefferman2016}.

First consider the if direction.
Assume we have a family of Hamiltonians $\mathcal{M}$ meeting the conditions $i-iii$ of the theorem.
Since \mham is not \QMA-complete under faithful reductions we have to use exponentially faithful reductions to \precmham, where by \cref{def:precqma-under-faithful} we are considering faithfulness with respect to exponential time circuits.
The if direction follows immediately since going through the proof of \cref{th:classification_intro} with an exponentially small gap in the acceptance operator and exponentially long computation time requires an exponentially large energy penalty $\Delta$, and gives an exponentially small accuracy parameter $\epsilon$.

To see the only if direction, consider a universal model, $\mathcal{M}$ meeting conditions $1-3$ of the theorem.
Necessity of closure is trivial.
Note, \mham cannot be \QMA-complete under faithful reductions, because by \cref{th:classification_intro} if it was the universal model would be efficient in terms of norm.

Consider simulating the family of Hamiltonians $\mathcal{K}$ using the model $\mathcal{M}$ to exponential accuracy in $\epsilon$.
This demonstrates that \precmham  is \PSPACE-complete (including under exponentially faithful reductions), but it does not contradict the statement that \mham is not \QMA-complete.
This is because the \lham problem requires that the terms in the Hamiltonian are of order 1, which requires dividing each term in the simulator system by the simulating system norm, which by assumption is exponential in the size of the system.
This leads to a Hamiltonian with an exponentially small spectral gap, which attenuates the promise gap too fast to maintain \QMA-completeness, but gives \PreciseQMA-completeness (and therefore \PSPACE-completeness).

\end{proof}

An example of a universal Hamiltonian meeting the conditions of \cref{cr:precise} is given in~\cite{kohler2020translationallyinvariant}.
It is a translationally invariant universal model, but includes a phase parameter which encodes information about the target system, so the interactions are not fixed.

%
%
%
%

\appendix

\section{The Schrieffer-Wolff expansion} \label{appendix-sw}

Consider a finite dimensional Hilbert space decomposed into a direct sum:
\begin{equation}
\mathcal{H} = \mathcal{H}_+ \oplus \mathcal{H}_-
\end{equation}

Let $\Pi_-$ be the projector onto $\mathcal{H}_-$ and $\Pi_+$ be the projector onto $\mathcal{H}_+$.
Let $H_0$ and $H_1$ be Hermitian operators acting on $\mathcal{H}$ such that $H_0$ is block diagonal with respect to the direct sum.
Assume all the eigenvalues of $H_0$ on $\mathcal{H}_-$ are in the range $[0,\lambda_0]$ for $\lambda_0 < 1$.

Consider the perturbed Hamiltonian $\tilde{H} = \Delta H_0 + H_1$ where $\Delta \gg 1$ and $\| V\| < \frac{\Delta}{2}$.
The Schrieffer-Wolf transformation is a unitary rotation $e^S$ which is used to perturbatively diagonalise $\tilde{H}$.
It satisfies the following properties:
\begin{equation}
\Pi_- \left(e^S\tilde{H} e^{-S} \right)\Pi_+ = \Pi_+ \left(e^S\tilde{H} e^{-S} \right)\Pi_- = 0
\end{equation}

\begin{equation}
\Pi_- S\Pi_- = \Pi_+ S\Pi_+ = 0 \textrm{      ,      } \|S\| <\frac{\pi}{2}
\end{equation}

The effective low-energy Hamiltonian $\Heff$ acting on $\mathcal{H}_-$ is given by
\begin{equation}
\Heff = \Pi_- \left(e^S \tilde{H} e^{-S} \right)\Pi_-
\end{equation}

Define $\mathcal{R} \defas \left\{\ket{\psi} : \bra{\psi}\tilde{H} \ket{\psi} \in \left[\lambda_0 - \frac{\Delta}{2}, \lambda_0 + \frac{\Delta}{2} \right] \right\}$, and let $\Pi_\mathcal{R}$ be the projector onto $\mathcal{R}$.
Then
\begin{equation} \label{Eq:SW-projectors}
e^S\Pi_\mathcal{R} e^{-S} = \Pi_-
\end{equation}

The operators $S$ and $\Heff$ can be expressed as Taylor series:
\begin{equation} \label{S-bound}
S = \sum_{j=1}^\infty S_j
\end{equation}
\begin{equation} \label{H-bound}
\Heff = \sum_{j=1}^\infty \Heff^{(j)}.
\end{equation}
A systematic method for calculating the Taylor coefficients is given in~\cite[Section~3.2]{bravyi_2011}.
We will only need the first two coefficients:
\begin{equation} \label{H1}
\Heff^{(0)} = H_0 \Pi_- \quad\mathrm{and}\quad  \Heff^{(1)} = \Pi_- H_1 \Pi_-
\end{equation}

The size of the operators in the Taylor expansion can be bounded (see~\cite[Lemma~3.4]{bravyi_2011}):
\begin{equation}
\|S_j \| \le \BigO\left(\left(1+\frac{\lambda_0}{\pi\Delta} \right)\left(\frac{\|H_1\|}{\Delta} \right)^j\right)
\end{equation}

\begin{equation}
\|\Heff^{(j)} \| \le \BigO\left(\Delta \left(1+\frac{\lambda_0}{\pi\Delta} \right)\left(\frac{\|H_1\|}{\Delta} \right)^j\right)
\end{equation}
This implies~\cite{BH17}
\begin{equation}
\|S \| \le \BigO\left( \Delta^{-1}\| H_1\| \left(1+\frac{\lambda_0}{\pi\Delta} \right) \right)
\end{equation}
\begin{equation} \label{H-bound-error}
\|\Heff - \Heff(k) \| \le \BigO\left( \Delta^{-k}\| H_1\|^{k+1} \left(1+\frac{\lambda_0}{\pi\Delta} \right) \right)
\end{equation}
where $\Heff(k) = \sum_{j=1}^k \Heff^{(j)}$.

\section*{Acknowledgements}
J.\,B.~acknowledges support from the Draper's Junior Research Fellowship at Pembroke College.
T.\,S.\,C.~is supported by the Royal Society.
T.\,K.~is supported by the EPSRC Centre for Doctoral Training in Delivering Quantum Technologies [EP/L015242/1].
This work was supported by the EPSRC Prosperity Partnership in Quantum Software for Simulation and Modelling (EP/S005021/1).

\bibliography{ref-gc-quantum}
\end{document}